\newcommand{\figcaption}{\def\@captype{figure}\caption}
\newtheorem{thm}{Theorem}
\newcommand{\pr}{\ensuremath{\text{Pr}}}
\newcommand{\dr}{d}
\newcommand{\du}{\partial}
\newcommand{\berx}{\ensuremath{{\text{P}_{\rm e}(\rho x)}}}
\newcommand{\avber}{\ensuremath{\overline{\text{P}}_{\rm e}(\rho, N)}}
\newcommand{\eber}{\ensuremath{\text{E}_{\mathcal{N}} \left[\overline{\text{P}}_{\rm e}(\rho, \mathcal{N}) \right]}}
\newcommand{\beren}{\ensuremath{\overline{\text{P}}_{\rm e}(\rho, \lambda)}}
\newcommand{\cdfrn}{\ensuremath{F_{\gamma_{\rm s}}(x)}}
\newcommand{\avcap}{\ensuremath{\overline{C}(\rho, N)}}
\newcommand{\ecap}{\ensuremath{\text{E}_{\mathcal{N}} \left[\overline{C}(\rho, \mathcal{N})\right]}}
\newcommand{\outprob}{\ensuremath{\text{P}_{\rm{out}}(\rho,N,R)}}
\newcommand{\aveoutprob}{\ensuremath{\text{P}_{\rm{out}}(\rho,\mathcal{N},R)}}
\newcommand{\meanoutprob}{\ensuremath{\text{P}_{\rm{out}}(\rho,\lambda,R)}}
\newcommand{\capen}{\ensuremath{\overline{C}(\rho,\lambda)}}
\newcommand{\ld}{\ensuremath{\lambda}}
\newcommand{\al}{\ensuremath{\alpha}}
\newcommand{\cdfinv}{\ensuremath{F_{\gamma_{\rm s}}^{-1}(e^{-u})}}
\newcommand{\eu}{\ensuremath{e^{-u}}}
\begin{document}
\title{Cognitive Radio with Random Number of Secondary Users}
\author{Ruochen Zeng, Cihan Tepedelenlio\u{g}lu, \emph{Member, IEEE}
\footnote{\scriptsize{Ruochen Zeng and C. Tepedelenlio\u{g}lu are with the Ira Fulton School of Engineering, Arizona State University, Tempe, AZ 85287, USA. (Email:
zengrc@asu.edu, cihan@asu.edu,).}}}
\date{\today}
\maketitle
\vspace{-0.6in}

\begin{abstract}
A single primary user cognitive radio system with multi-user diversity at the secondary users is considered where there is an interference constraint between secondary and primary users. The secondary user with the highest instantaneous SNR is selected for communication from a set of active users which also satisfies the interference constraint. The active number of secondary users is shown to be binomial, negative binomial, or Poisson-binomial distributed depending on various modes of operation. Outage probability in the slow fading scenario is also studied. This is then followed by a derivation of the scaling law of the ergodic capacity and BER averaged across the fading, and user distribution for a large mean number of users. The ergodic capacity and average BER under the binomial user distribution is shown to outperform the negative binomial case with the same mean number of users. Moreover, the Poisson distribution is used to approximate the user distribution under the non-i.i.d interference scenario, and compared with binomial and negative binomial distributions in a stochastic ordering sense. Monte-Carlo simulations are used to supplement our analytical results and compare the performances under different user distributions.
\end{abstract}

\vspace{-0.1in}

\begin{keywords}
Cognitive radio, multi-user diversity, stochastic ordering, interference constraint
\end{keywords}

\section{Introduction} \label{sec: intro}
Conventional wireless communication systems face the challenge of scarcity of available spectrum resources, and cognitive radio is considered as an ideal architecture to address this problem \cite{788210}. Most cognitive radio paradigms can be categorized into two kinds: overlay and underlay. The overlay paradigm relies on efficient and accurate sensing algorithms to detect the idleness of the primary users (PUs) so that secondary users (SUs) only transmit during these idle times \cite{788210, 6054064, 4493828, 6178840, 5434164}. In the underlay paradigm, which is the focus of this paper, SUs transmit simultaneously with PUs, where interference power at the primary receiver is kept below a certain threshold to satisfy an interference constraint \cite{4840529}. Capacity for Gaussian multiple-input multiple-output (MIMO) channels under received-power constraints in underlay cognitive radio systems is studied in \cite{4069138}.

Multi-user diversity (MUD) has been considered in this context for opportunistic communications of cognitive SUs \cite{4786488}. A widely adopted assumption is that SUs' transmit powers are adjusted to satisfy a peak interference constraint at the primary receiver. Subject to this constraint, the SU with the highest instantaneous SNR is selected for communication. Under this assumption, statistics of the SU transmit SNR in the high power region is studied in \cite{4786488}. When taking into account the interference introduced by the PU at the secondary receiver, MUD gain of the signal-to-noise-plus-interference ratio (SINR) under cognitive multiple-access channel (MAC), broadcast channel (BC), and parallel-access (PAC) are investigated in \cite{5403611}. The CDF expressions of the SINR under MAC, BC \cite{6133630}, and PAC \cite{5577781} are derived to analyze the BER performance. Another common assumption in cognitive MUD systems requires that SUs satisfy an average transmit and interference power constraint at the primary receiver \cite{4100173}. In this scheme, secondary link capacity is shown to scale like $O(M\log\log N)$ as a function of the number of SUs $N$ and available primary spectra $M$ \cite{5454289,5054705}.

In most existing cognitive radio MUD systems, all secondary transmitters scale down their transmit power to meet the interference constraint if the instantaneous peak transmit power causes too much interference. After this potentially continuous power adjustment, the user with the best instantaneous SNR at the secondary receiver is chosen \cite{4786488, 5403611, 5577781}. This scheme requires accurate continuous feedback of the interference channel. We consider an uplink underlay cognitive radio system setup with a single PU and multiple SUs, each equipped with a single antenna. All secondary transmissions obey a pre-determined interference constraint at the primary receiver. The secondary receiver, which is the base station (BS), dynamically updates an index set which contains a list of SUs that satisfy the interference constraint, which creates a random number of SUs. This can be realized with the presence of a single-bit feedback channel between primary receiver and BS \cite{6279522} to inform users whether they are active or passive.  

For the first time in the literature, we consider the effect of having a random number of active users on the performance analysis of cognitive radio system with MUD. In this paper, we study the asymptotic behavior of ergodic capacity and BER averaged across the fading and the user distribution with large mean number of SUs. We also derive non-asymptotic closed form expressions for average BER under several user distributions. Then we consider the non-homogeneous interference case. Furthermore, a stochastic ordering approach is adopted to compare the system performances under different active user distributions. 

The rest of the paper is organized as follows. Section \ref{sec: System_model} and \ref{sec: math_pre} present the system model and some useful mathematical preliminaries. Section \ref{sec: outprob} derives the outage probability under different user distributions. Section \ref{sec: erg_cap} investigates properties of ergodic capacity under different user distributions. Section \ref{BER} derives the closed form expression of average BER under binomial and negative binomial (NB) user distributions, respectively. Section \ref{sec: poisapprox} studies the non-i.i.d interference channels, in which number of users follows a sum of Bernoulli variables with different parameters, which we term Poisson-binomial (PB) distribution, following \cite{la_cam}.  Section \ref{sec: user_order} discusses stochastic ordering of different user distributions. Section \ref{sec: simulations} presents numerical simulations of ergodic capacity and average BER to corroborate our analytical results. Section \ref{sec: conclusions} concludes our work.

\section{System Model}\label{sec: System_model}
We consider an uplink cognitive radio system with multiple SUs, a single PU, and one base station (BS) which serves as the receiver to the SUs. Both the BS and users are assumed to have a single antenna. 
\begin{center}
\includegraphics[height=9cm,width=12cm]{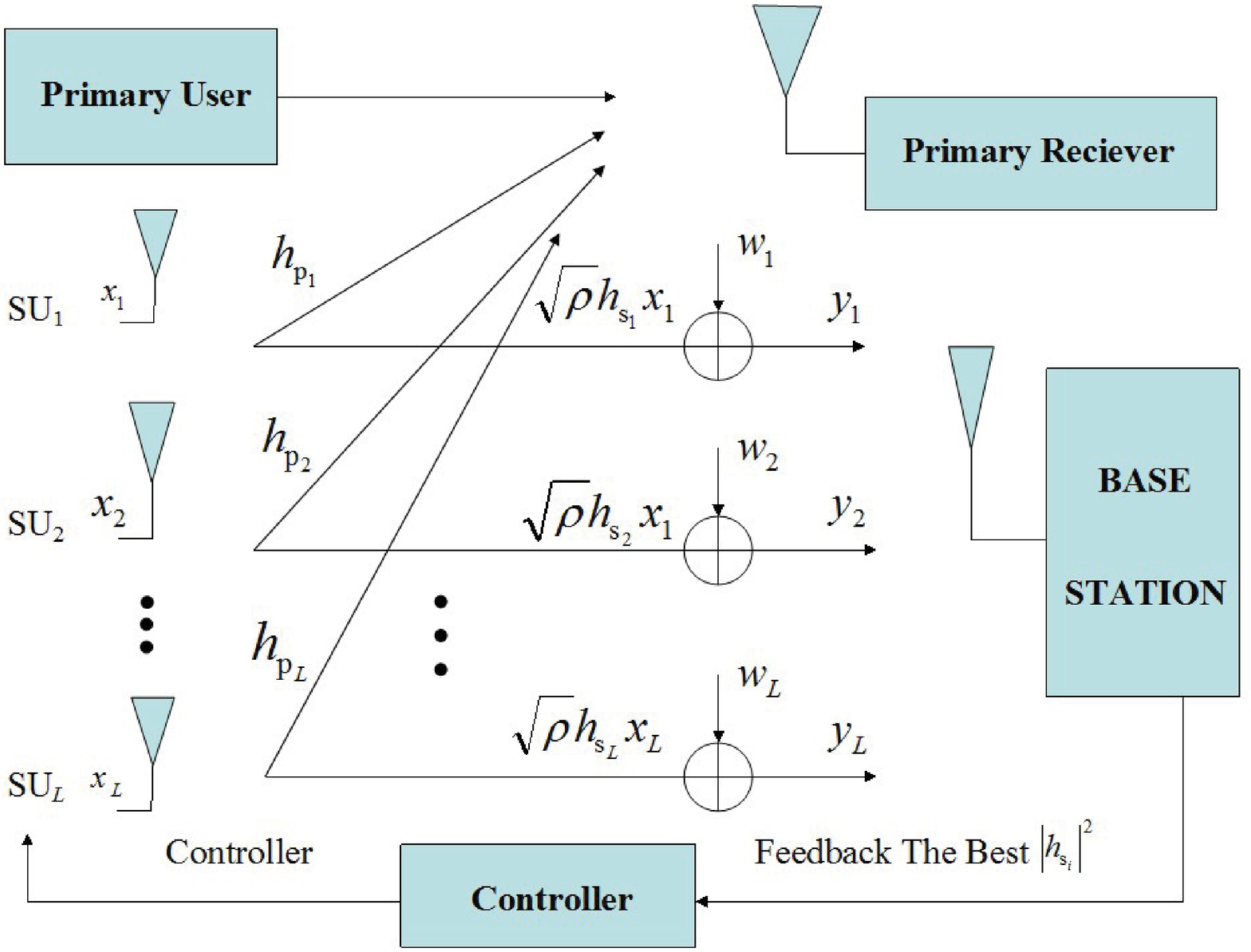}
\end{center}
\vspace{-0.5in}
\figcaption{System Model of Cognitive Radio System}
\label{fig: system_model}
\vspace{0.2in}
As shown in Figure \ref{fig: system_model}, we consider a cognitive radio system with a total of $L$ SUs where the MUD scheme is applied to the secondary system. A SU is allowed to share the spectrum with a primary link as long as the interference power to the primary receiver is less than a threshold $Q$. The received signal from the $i^{th}$ SU at the BS can be expressed as,
\begin{equation}\label{eqn: rcvd_sig_cog}
y_i = \sqrt{\rho}h_{{\rm s}_{i}} x_i+w_i, \hspace{0.4 in} i = 1, 2,\ldots, L,
\end{equation}
where $h_{{\rm s}_{i}}$ denotes the channel coefficient from the $i^{th}$ SU to the BS, $x_i$ is the transmitted symbol, $w_i$ is white Gaussian noise (AWGN). The average received power $\rho$ at the BS is assumed to be identical across SUs. The channel gain of the $i^{th}$ SU at the secondary BS can be expressed as $\gamma_{{\rm s}_{i}} = |h_{{\rm s}_{i}}|^2$, whereas the interference channel gain of the $i^{th}$ SU at the primary receiver is $\gamma_{{\rm p}_{i}} = |h_{{\rm p}_{i}}|^2$. The channel gain of the selected user is denoted by
\begin{align}\label{best_gain}
{\gamma_{\rm s}^{\ast}} = \max_{\{i|i\in \textit{S}\}}\{|h_{{\rm s}_{i}}|^2\},
\end{align}
where $\textit{S}$ is a subset of the users that respect the interference constraint. Consequently, SUs either transmit with fixed power $\rho$, or remain silent, so that a simple transmitter with a fixed power level and one bit feedback is sufficient. In contrast, previous work \cite{4786488, 5403611, 5577781} assumes that secondary transmit power is adjusted to $Q/\gamma_{{\rm p}_{i}}$ if interference constraint is violated, which requires feedback of instantaneous CSI of the interference channel and a sophisticated transmitter to support infinite power levels.  

The distribution of the cardinality of $\textit{S}$ will be specified when different SU distributions are studied. Since all SUs have i.i.d. fading channels to the secondary BS, the subscript $i$ will be dropped when deriving the cumulative distribution function of $\gamma_{{\rm s}_{i}}$. Let $\mathcal{N}$ be the cardinality of $\textit{S}$. Conditioned on $\mathcal{N}=k$, the CDF of the channel gain of the chosen user can be obtained using elementary order statistics as $F_{\gamma_{\rm s}}^k(x)$. To obtain the CDF of ${\gamma_{\rm s}^{\ast}}$ in \eqref{best_gain} we have

\begin{equation} \label{eqn: cdf_gen}
F_{{\gamma_{\rm s}^{\ast}}}(x)= {\text{E}}_{\mathcal{N}}\left[
F_{{\gamma_{\rm s}}}^{\mathcal{N}}(x) \right]= \sum_{k = 0}^{\infty}
{\pr}\left[\mathcal{N}= k \right] F_{\gamma_{\rm s}}^k(x)=
U_{\mathcal{N}}(F_{\gamma_{\rm s}}(x))
\end{equation}
where $U_{\mathcal{N}}(z) = \sum_{k=0}^{\infty} {\pr}\left[\mathcal{N} = k \right]z^k$, $0\leq z\leq 1$, is the probability generating function (PGF) of $\mathcal{N}$.

\section{Mathematical Preliminaries}\label{sec: math_pre}
In this section, we introduce some mathematical preliminaries that will be useful throughout the paper.

\subsection{Completely Monotonic Functions}
A non-negative function $\tau(x):\mathbb{R}^+ \rightarrow \mathbb{R}$ is {\it completely monotonic} (c.m.) if its derivatives alternate in sign \cite{shaked_stochastic_1994}, i.e.,
\begin{equation}\label{eqn: math_prelim_one}
(-1)^k \frac{d^k \tau(x)}{\dr x^k} \geq 0, \hspace{0.2in} \forall x,
\hspace{0.2in} k = 1, 2, 3, \ldots.
\end{equation}
We are also interested in positive functions
whose first-order derivatives are c.m., which are said to have a completely monotonic derivative (c.m.d.). Due to a well-known theorem by Bernstein \cite[pp. 22]{shaked_stochastic_1994}, an equivalent definition for c.m. function is that it can be expressed as a positive mixture of decaying exponentials:
\begin{equation}\label{eqn: math_prelim_two}
\tau(x) = \int_0^{\infty} e^{-sx} d\psi(s)
\end{equation}
for some non-decreasing function $\psi(s)$.

\subsection{Laplace Transform Ordering}\label{sec: LT_ordering}
In this section we introduce {\emph{Laplace transform}} (LT) ordering, a kind of stochastic ordering, to compare different user distributions. This stochastic ordering will be useful in comparing error rate and ergodic capacity averaged across user and channel distributions. LT order is a partial ordering on non-negative random variables \cite[pp. 233]{stochastic_ordering}.

Let $\mathcal{X}$ and $\mathcal{Y}$ be non-negative random variables. $\mathcal{X}$ is said to be less than $\mathcal{Y}$ in the LT order (written $\mathcal{X} \leq_{\rm{Lt}} \mathcal{Y}$), if
 $\text{E}[e^{-s\mathcal{X}}] \geq \text{E}[ e^{-s \mathcal{Y}}]$ for all $s>0$. 
An important theorem found in \cite{shaked_stochastic_1994}, and \cite{mueller_comparison_2002} is
given next:
\begin{thm}\label{thm: one}
Let $\mathcal{X}$ and $\mathcal{Y}$ be two random variables. If
$\mathcal{X} \leq_{\rm{Lt}} \mathcal{Y}$, then,
$\text{E}\left[\psi(\mathcal{X}) \right] \geq \text{E}\left[ \psi(\mathcal{Y}) \right]$ for all c.m. functions $\psi(\cdot)$, provided the expectation exists. Moreover, when $\mathcal{X} \leq_{\rm{Lt}} \mathcal{Y}$, $\text{E}[\psi(\mathcal{X})] \leq \text{E}[\psi(\mathcal{Y})]$ holds for any c.m.d. function $\psi(\cdot)$, provided the expectation exists.
\end{thm}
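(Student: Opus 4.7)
The plan is to invoke the Bernstein representation in \eqref{eqn: math_prelim_two} to rewrite $\text{E}[\psi(\mathcal{X})]$ as an integral of Laplace transforms of $\mathcal{X}$, after which the hypothesis $\mathcal{X} \leq_{\rm{Lt}} \mathcal{Y}$ directly yields the claimed inequality. The whole argument is essentially Bernstein plus Tonelli.

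For the first statement, I would start by writing any c.m.\ function as $\psi(x) = \int_0^{\infty} e^{-sx} d\mu(s)$ for some non-decreasing $\mu$, substitute this into $\text{E}[\psi(\mathcal{X})]$, and interchange the order of integration. Since the integrand $e^{-sx}$ is non-negative, Tonelli's theorem applies and gives
\begin{equation}
\text{E}[\psi(\mathcal{X})] = \int_0^{\infty} \text{E}\!\left[e^{-s\mathcal{X}}\right] d\mu(s),
\end{equation}
with the analogous identity for $\mathcal{Y}$. The LT ordering says $\text{E}[e^{-s\mathcal{X}}] \geq \text{E}[e^{-s\mathcal{Y}}]$ for every $s>0$, and since $\mu$ is a positive measure this pointwise inequality integrates to $\text{E}[\psi(\mathcal{X})] \geq \text{E}[\psi(\mathcal{Y})]$, proving the c.m.\ half.

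For the c.m.d.\ half, the idea is the same but one integration ``upstream.'' If $\psi$ has c.m.\ derivative, then $\psi'(x) = \int_0^{\infty} e^{-sx} d\nu(s)$ for some non-decreasing $\nu$, and integrating from $0$ to $x$ and swapping order yields the representation
\begin{equation}
\psi(x) = \psi(0) + \int_0^{\infty} \frac{1-e^{-sx}}{s}\, d\nu(s),
\end{equation}
where the integrand is defined by its limit at $s=0$. Taking expectations and again applying Tonelli gives $\text{E}[\psi(\mathcal{X})] = \psi(0) + \int_0^{\infty} s^{-1}(1-\text{E}[e^{-s\mathcal{X}}])\, d\nu(s)$. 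The hypothesis $\mathcal{X}\leq_{\rm{Lt}}\mathcal{Y}$ now makes the integrand for $\mathcal{X}$ pointwise smaller than the integrand for $\mathcal{Y}$, which flips the direction of the overall inequality and yields $\text{E}[\psi(\mathcal{X})] \leq \text{E}[\psi(\mathcal{Y})]$.

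The main obstacle is purely technical: justifying the Fubini/Tonelli interchange and, in the c.m.d.\ case, ensuring that $s^{-1}(1-\text{E}[e^{-s\mathcal{X}}])$ is $\nu$-integrable near $s=0$ and at infinity. Non-negativity of the integrands handles the first, while the standing assumption that $\text{E}[\psi(\mathcal{X})]$ and $\text{E}[\psi(\mathcal{Y})]$ exist handles the second. Beyond these bookkeeping points, there is no real combinatorial or analytic difficulty; the theorem is a direct shadow of the defining Laplace inequality pushed through the Bernstein integral.
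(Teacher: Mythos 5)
Your proof is correct: the Bernstein representation plus Tonelli argument for the c.m.\ case, and the ``one integration upstream'' representation $\psi(x)=\psi(0)+\int_0^\infty s^{-1}(1-e^{-sx})\,d\nu(s)$ for the c.m.d.\ case, is exactly the standard argument. The paper itself offers no proof of Theorem~\ref{thm: one} (it cites \cite{shaked_stochastic_1994} and \cite{mueller_comparison_2002}), and your derivation matches the classical proof given in those references, with the integrability bookkeeping handled appropriately by non-negativity and the assumed existence of the expectations.
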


We will use an equivalent representation of LT ordering of discrete random variables to order the user distribution by the ordering of their PGFs. By defining $z:=e^{-s}$, one can rewrite $\text{E}\left[e^{-s\mathcal{X}}\right] \geq \text{E}\left[e^{-s\mathcal{Y}}\right]$ for $z \geq 0$ as
$\text{E}\left[z^{\mathcal{X}}\right] \geq \text{E}\left[z^{\mathcal{Y}}\right]$ for $0\leq z \leq 1$, which is the same as $U_{\mathcal{X}}(z) \geq U_{\mathcal{Y}}(z)$, $0 \leq z \leq 1$, where we recall that $U_{\mathcal{X}}(z) = \text{E}[z^\mathcal{X}]$ represents the probability generating function of the discrete random variable $\mathcal{X}$. This representation will be helpful when we compare two user distributions in Section \ref{sec: user_order}.

\subsection{Regular Variation}\label{sec: reg_vary}
A function $\psi(s)$ is {\it regularly varying} with exponent $\mu \neq 0$ at $s=\infty$ if it can be expressed as $\psi(s)= s^{\mu} l(s)$ where $l(s)$ is slowly varying which by definition satisfies $\lim_{s \rightarrow \infty} l(\kappa s)/l(s)= 1$ for $\kappa>0$. So, intuitively, regular  captures the notion of polynomial-like behavior asymptotically. Regular (slow) variation of $\psi(s)$ at $s=0$ is equivalent to regular (slow) variation of $\psi(1/s)$ at $\infty$. The Tauberian theorem for Laplace transforms, applies to c.m. functions of the form \eqref{eqn: math_prelim_two} and states that $\tau(x)$ is regularly varying at $x=\infty$ if and only if $\psi(s)$ is regularly varying at $s=0$. The following theorem is from \cite[pp. 73]{feller_introduction_2009}:

\begin{thm} \label{tbr_th}
If a non-decreasing function $\psi(s) \geq 0$ defined on $\mathbb{R}^+$
has a Laplace transform $\tau(x) = \int_0^{\infty} e^{-sx} d\psi(s)$
for $x \geq 0$, then $\psi(s)$ having variation exponent $\mu$ at $\infty$ (or $0$) and $\tau(x)$ having variation exponent $-\mu$ at $0$ (or $\infty$) imply each other.
\end{thm}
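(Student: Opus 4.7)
The plan is to split the biconditional into an Abelian direction (regular variation of $\psi$ implies regular variation of $\tau$) and a Tauberian direction (the converse), and to derive the two ``(or)'' cases from one another by the change of variable $s \mapsto 1/s$, which interchanges the roles of $0$ and $\infty$. I will work out the pair $\psi$ regularly varying at $\infty$ with exponent $\mu$ versus $\tau$ regularly varying at $0$ with exponent $-\mu$.

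For the Abelian direction, the natural first step is to rescale the Laplace integral: substituting $u = sx$ gives $\tau(x) = \int_0^\infty e^{-u}\, d\psi(u/x)$. Writing $\psi(u/x) = \psi(1/x) \cdot \bigl[\psi(u/x)/\psi(1/x)\bigr]$ and invoking regular variation at $\infty$, the ratio converges pointwise to $u^{\mu}$ as $x \to 0$. Since $\psi$ is monotone non-decreasing, a uniform bound of the form $\psi(u/x)/\psi(1/x) \leq C (u^{\mu+\varepsilon} + u^{\mu-\varepsilon})$ holds by the Potter bounds, giving an integrable dominating function against $e^{-u}$. Dominated convergence then yields $\tau(x)/\psi(1/x) \to \Gamma(\mu+1)$ after an integration by parts, which is exactly the desired regular variation of $\tau$ with exponent $-\mu$ at $0$.

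For the Tauberian direction I would follow Karamata's argument. Define the sequence of measures $d\mu_x(u) := d\psi(u/x)/\tau(x)$; the hypothesis says that $\int_0^\infty e^{-\lambda u}\, d\mu_x(u)$ converges as $x\to 0$ to $\lambda^{-\mu}\Gamma(\mu+1)/\Gamma(\mu+1) = \lambda^{-\mu}$ for every $\lambda>0$. The goal is to upgrade this convergence of Laplace transforms to convergence against the indicator $\mathbf{1}_{[0,1]}(u)$, which encodes the value $\psi(1/x)$ itself. Using Stone--Weierstrass, polynomials in $e^{-u}$ are dense in $C_0[0,\infty)$, so convergence extends to all bounded continuous test functions vanishing at infinity, and then a squeeze argument using the monotonicity of $\psi$ handles the jump of the indicator.

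The principal obstacle is this last step of the Tauberian direction: monotonicity of $\psi$ is essential to pass from continuous test functions to the discontinuous indicator, since without it Tauberian theorems are known to fail. A clean way to dispatch it is via the monotone density theorem, which converts asymptotics of $\psi$ from asymptotics of its Laplace--Stieltjes transform once one has Karamata's identification of limiting moments. Once the case ``$\psi$ at $\infty$ versus $\tau$ at $0$'' is in hand, the companion case follows by applying the same argument to $\widetilde\psi(s) := \psi(1/s)$ (suitably normalized), so no separate calculation is needed.
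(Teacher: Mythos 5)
The paper itself offers no proof of this statement: Theorem~\ref{tbr_th} is quoted directly from Feller \cite[pp. 73]{feller_introduction_2009}, so there is no in-paper argument to compare against. Your sketch reconstructs the standard Karamata--Feller proof of exactly this result: an Abelian half via the rescaling $u=sx$, Potter bounds plus monotonicity to dominate, and dominated convergence giving $\tau(x)\sim\Gamma(\mu+1)\,\psi(1/x)$; and a Tauberian half that reads $\tau(\lambda x)/\tau(x)\to\lambda^{-\mu}$ as convergence of the Laplace transforms of the rescaled measures $d\mu_x(u)=d\psi(u/x)/\tau(x)$ and upgrades it, using monotonicity, to convergence at the point $u=1$. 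This is the same route as the cited source, and as a proof strategy it is sound.

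Three places in the write-up would not survive scrutiny as stated. First, the claim that convergence ``extends to all bounded continuous test functions vanishing at infinity'' is false in general: the measures $\mu_x$ need not have bounded total mass (their tails grow like $u^{\mu}$), so for instance $f(u)=(1+u)^{-1}$ with $\mu\geq 1$ gives divergent integrals; a sup-norm polynomial approximation in $e^{-u}$ controls the error only after weighting, i.e.\ Karamata's device of testing against $e^{-u}g(e^{-u})$ with $g$ continuous on $[0,1]$. This suffices for your purpose, since the squeeze around $\mathbf{1}_{[0,1]}$ only needs compactly supported continuous test functions, but the step must be phrased that way. Second, the monotone density theorem is not a way to ``dispatch'' the Tauberian step: to invoke it you would first need regular variation of $\int_0^t\psi(s)\,ds$, which itself requires the very Tauberian conclusion being proved, so the remark is circular as a shortcut (it is a separate tool for passing from integrals to densities). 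Third, the companion case does not follow by substituting $\widetilde\psi(s)=\psi(1/s)$ --- that change of variable destroys both the monotonicity direction and the Laplace-transform relation; instead one reruns the identical rescaling argument with $x\to\infty$, noting that only the behavior of $\psi$ near $0$ matters because the contribution of $s\geq\delta$ to $\tau(x)$ is exponentially small compared with the regularly varying factor $\psi(1/x)$. With these repairs your outline is the classical proof.
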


\subsection{Schur-Concave Functions and Majorization}
In this section we first introduce the notion of majorization and Shur-convex functions. For any $\bm{x} = (x_1,...,x_n) \in \mathbb{R}^n$ and $\bm{y} = (y_1,...,y_n) \in \mathbb{R}^n$, let $x_{[1]} \geq \cdots \geq x_{[n]}$ and $y_{[1]} \geq \cdots \geq y_{[n]}$ denote the components of $\bm{x}$ and $\bm{y}$ in decreasing order. We say $\bm{x}$ is majorized by vector $\bm{y}$, equivalently $\bm{x} \prec \bm{y}$ to mean $\sum_{i=1}^k x_{[i]} \leq \sum_{i=1}^k y_{[i]}$ for all $k=1,\ldots,n$, and $\sum_{i=1}^n x_{[i]}=\sum_{i=1}^n y_{[i]}$. A Schur-concave function $g$:${\mathbb{R}}^n \rightarrow \mathbb{R}$ satisfies $g(\bm{x}) \geq g(\bm{y})$ whenever $\bm{x} \prec \bm{y}$. The following theorem is proved in \cite{Ineq}:
\begin{thm}\label{thm: schur} 
Let $g$ be a continuous non-negative function defined on an interval $I \subset \mathbb{R}$. Then
\begin{align}
\phi(\bm{x}) = \prod_{i=1}^n g(x_i), \hspace{0.4 in} \bm{x} \in I^{n},
\end{align} 
is Schur-concave on $I^n$ if and only if $\log(g)$ is concave on $I$.
\end{thm}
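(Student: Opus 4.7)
The plan is to reduce Schur-concavity of the product $\phi$ to a one-variable concavity statement through two classical ingredients. Since $\phi \geq 0$, the inequality $\phi(\bm{x}) \geq \phi(\bm{y})$ is equivalent to $\sum_i \log g(x_i) \geq \sum_i \log g(y_i)$ wherever both products are positive, so the product structure of $\phi$ turns Schur-concavity of $\phi$ into Schur-concavity of a separable sum. Combined with the classical Hardy--Littlewood--P\'{o}lya result that a separable sum $\sum_i h(x_i)$ is Schur-concave on $I^n$ iff $h$ is concave on $I$, applying this with $h := \log g$ delivers both directions at once.

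For the ``if'' direction, assume $\log g$ is concave on $I$. For any $\bm{x} \prec \bm{y}$ in $I^n$, HLP gives $\sum_i \log g(x_i) \geq \sum_i \log g(y_i)$; since exponentiation is monotone, $\phi(\bm{x}) \geq \phi(\bm{y})$, so $\phi$ is Schur-concave.

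For the ``only if'' direction, it suffices to establish midpoint-concavity of $\log g$, because continuity of $g$ promotes midpoint-concavity to full concavity via Sierpi\'nski's classical theorem. Given $a,b \in I$, pick a reference point $c \in I$ with $g(c)>0$ (if no such $c$ exists, then $g\equiv 0$ and the statement is vacuous). The vector $\bm{x}=\bigl(\tfrac{a+b}{2},\tfrac{a+b}{2},c,\ldots,c\bigr)$ is obtained from $\bm{y}=(a,b,c,\ldots,c)$ by a single T-transform (a Robin Hood transfer on the first two coordinates that preserves their sum while moving them towards their mean), so $\bm{x}\prec\bm{y}$. Schur-concavity of $\phi$ then yields $g\bigl(\tfrac{a+b}{2}\bigr)^{2}g(c)^{n-2}\geq g(a)g(b)g(c)^{n-2}$. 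Cancelling $g(c)^{n-2}>0$ and taking logarithms delivers $\log g\bigl(\tfrac{a+b}{2}\bigr)\geq \tfrac12[\log g(a)+\log g(b)]$, the midpoint-concavity of $\log g$ on $I$.

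The principal technical nuisance will be the degenerate case where $g$ vanishes on part of $I$; there $\log g$ takes the value $-\infty$ and ``concave'' has to be read in the extended-real sense. The clean remedy is to verify that $\{g>0\}$ is a subinterval $I_{0}\subseteq I$ on which $\log g$ is genuinely concave, and to check that if any $x_{i}$ lies outside $I_{0}$ then both sides of the Schur-concavity comparison collapse to zero in a consistent way so that no inequality is violated. This is essentially bookkeeping rather than a substantive obstacle; everything else reduces to the two-coordinate T-transform argument, which is the standard machinery for Schur-type product inequalities.
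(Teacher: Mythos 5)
The paper itself does not prove this theorem: it is stated as a known result quoted from \cite{Ineq} (Marshall--Olkin), so there is no in-paper proof to compare against. Your argument is the standard textbook route and is essentially sound: sufficiency by taking logarithms and invoking the Schur/Hardy--Littlewood--P\'olya criterion that a separable sum $\sum_i h(x_i)$ is Schur-concave iff $h$ is concave, and necessity by the two-coordinate T-transform $(a,b,c,\ldots,c)\mapsto\bigl(\tfrac{a+b}{2},\tfrac{a+b}{2},c,\ldots,c\bigr)$ (valid since appending equal coordinates preserves majorization), followed by the upgrade from midpoint concavity to concavity via continuity.

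Two pieces of the ``bookkeeping'' you defer are exactly where care is needed, and it is worth recording that they do close. In the sufficiency direction with zeros of $g$, the only dangerous case is $\bm{x}\prec\bm{y}$ with $g(x_i)=0$ for some $i$ while $g(y_j)>0$ for all $j$; this cannot occur because extended-real concavity of $\log g$ makes $I_0=\{g>0\}$ an interval, and majorization forces every $x_i$ into $[\min_j y_j,\max_j y_j]\subseteq I_0$, so $\phi(\bm{x})=0$ only when $\phi(\bm{y})=0$ as well (your phrase ``both sides collapse to zero'' is really this implication, not a symmetric statement). In the necessity direction, your midpoint inequality is informative only when $g(a)g(b)>0$; iterating it over dyadic combinations gives $g\geq\min\{g(a),g(b)\}>0$ on a dense subset of $[a,b]$, and continuity then shows $g>0$ on all of $[a,b]$, so $I_0$ is an interval on which $\log g$ is finite, midpoint concave and continuous, hence concave, while the inequality holds trivially in the extended sense when $g(a)g(b)=0$. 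Finally, the T-transform step tacitly needs $n\geq 2$ (for $n=1$ the ``only if'' direction is vacuous), which is the implicit regime of the theorem as used in the paper. With these details made explicit, your proof is complete and matches the classical argument behind the cited result.
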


\subsection{Bounds of Probability Generating Function}
Following theorem has been proved in \cite{PGFbound}:
\begin{thm}
Let $U_{\mathcal{N}}(z)$ be the PGF of a discrete random variable $\mathcal{N}$ with non-negative integer support. If the mean value $\lambda$ and variance $\sigma_{\mathcal{N}}^2$ exist, then the following inequalities hold for all $0\leqslant z \leqslant1$:
\begin{align}\label{eqn: PGFbound}
1+(z-1)\lambda \leqslant U_{\mathcal{N}}(z) \leqslant 1+(z-1)\lambda + \frac{(z-1)^2}{2} m(z)
\end{align}
where $m(z)/(\sigma_{\mathcal{N}}^2+{\lambda}^2-\lambda)$ is another PGF.
\end{thm}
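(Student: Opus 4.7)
The plan is to establish the two inequalities by manipulating $U_{\mathcal{N}}(z)=\sum_{k\ge 0}\pr[\mathcal{N}=k]\,z^k$ via the elementary factorization $z^k-1=(z-1)(1+z+\cdots+z^{k-1})$, applied once for the lower bound and twice for the upper bound. Both arguments are purely algebraic and require no appeal to the ordering theory of Sections~\ref{sec: LT_ordering} or \ref{sec: reg_vary}.

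For the lower bound, I would note that $z^k$ has non-negative second derivative on $[0,\infty)$, so it lies above its tangent at $z=1$, i.e.\ $z^k\ge 1+k(z-1)$ for every $z\ge 0$ and $k\in\{0,1,2,\ldots\}$. Multiplying by $\pr[\mathcal{N}=k]$ and summing over $k$ yields $U_{\mathcal{N}}(z)\ge 1+(z-1)\lambda$ immediately.

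For the upper bound, my plan is to derive the tight identity rather than a loose inequality, and then read the required upper bound off from it. Applying the factorization twice gives
\begin{align*}
z^k-1-k(z-1) &= (z-1)\sum_{j=0}^{k-1}(z^j-1) = (z-1)^2\sum_{j=1}^{k-1}\sum_{i=0}^{j-1}z^i = (z-1)^2\sum_{i=0}^{k-2}(k-1-i)\,z^i,
\end{align*}
the last step being a reordering of the double sum (the monomial $z^i$ appears for $j=i+1,\ldots,k-1$, hence $k-1-i$ times). Both sides vanish for $k\in\{0,1\}$, so weighting by $\pr[\mathcal{N}=k]$ and summing over $k\ge 2$ produces
\begin{equation*}
U_{\mathcal{N}}(z)=1+(z-1)\lambda+\frac{(z-1)^2}{2}\,m(z),\qquad m(z):=2\sum_{k\ge 2}\pr[\mathcal{N}=k]\sum_{i=0}^{k-2}(k-1-i)\,z^i,
\end{equation*}
so the upper bound in fact holds with equality.

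It then remains to verify that $m(z)/(\sigma_{\mathcal{N}}^2+\lambda^2-\lambda)$ is a PGF. By inspection $m(z)$ has non-negative power-series coefficients, and evaluation at $z=1$ gives $m(1)=\sum_{k\ge 2}\pr[\mathcal{N}=k]\,k(k-1)=\E[\mathcal{N}(\mathcal{N}-1)]=\sigma_{\mathcal{N}}^2+\lambda^2-\lambda$, so the normalized coefficients of $m(z)/(\sigma_{\mathcal{N}}^2+\lambda^2-\lambda)$ are non-negative and sum to one, i.e.\ they define a valid distribution on $\{0,1,2,\ldots\}$. The main obstacle is guessing the explicit form of $m(z)$; once the double-sum identity is in hand, non-negativity of the coefficients and the computation of $m(1)$ are immediate, and the handling of the $k=0,1$ terms (for which the summand is identically zero) is only a minor bookkeeping point.
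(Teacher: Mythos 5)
Your proof is correct. Note that the paper does not actually prove this statement: it is quoted verbatim from \cite{PGFbound}, so there is no in-paper argument to compare against, and your write-up supplies a self-contained derivation of the cited result. Your route is also slightly stronger than what the theorem asserts: the double application of $z^k-1=(z-1)(1+z+\cdots+z^{k-1})$ yields an exact remainder identity $U_{\mathcal{N}}(z)=1+(z-1)\lambda+\tfrac{(z-1)^2}{2}m(z)$ with an explicit $m(z)=2\sum_{k\ge 2}\pr[\mathcal{N}=k]\sum_{i=0}^{k-2}(k-1-i)z^i$, so the upper bound holds with equality and the lower bound then also follows from $m(z)\ge 0$ on $[0,1]$ (your tangent-line argument for the lower bound is equally fine). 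The verification $m(1)=\E[\mathcal{N}(\mathcal{N}-1)]=\sigma_{\mathcal{N}}^2+\lambda^2-\lambda$ together with the nonnegativity of the coefficients correctly identifies $m(z)/(\sigma_{\mathcal{N}}^2+\lambda^2-\lambda)$ as a PGF. Two small points worth a sentence each if you polish this: (i) the termwise rearrangement defining $m(z)$ should be justified, which is immediate here because all coefficients are nonnegative and their sum $\E[\mathcal{N}(\mathcal{N}-1)]$ is finite by the assumed existence of the variance, so $m(z)$ converges on $[0,1]$; (ii) in the degenerate case where $\mathcal{N}$ is supported on $\{0,1\}$ one has $\sigma_{\mathcal{N}}^2+\lambda^2-\lambda=0$ and $m\equiv 0$, so the normalized object is not defined and the theorem's phrasing must be read as excluding (or trivializing) that case; this is an issue with the statement as quoted, not with your argument.
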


\subsection{Asymptotics}
We say $\tau(x) = O(g(x))$ as $x \rightarrow \infty$ if and only if there is a positive constant $M$ and a real number $x_0$ such that $|f(x)| \leq M |g(x)|$ for all $x>x_0$. We say $\tau(x) = o(g(x))$ as $x \rightarrow \infty$ that for every positive integer $\epsilon$ there exists a constant $x_0$ such that $|f(x)| \leq \epsilon |g(x)|$ for all $x>x_0$ \cite{bigO}.

\section{Outage Probability}\label{sec: outprob}
The randomness of the number of active SUs arise from the selection of a desired SU according to their interference temperature at the primary receiver. Hence, how rapidly $\mathcal{N}$ varies with time depends on the rapidity fading $h_{{\rm p}_{i}}$ over the interference channel. When $h_{{\rm s}_{i}}$ and $h_{{\rm p}_{i}}$ both remain constant over the transmission duration of a codeword, the system is experiencing slow fading. Outage probability is an appropriate metric for slowly varying channels. The expression of the outage probability at average SNR $\rho$, and a desired transmit rate $R$ is defined as:
\begin{equation}\label{out_def}
\text{P}_{\rm out}(P,R) :=\text{Pr}\left[\log(1+ \rho\gamma_{\rm s}^{\ast})<R \right],
\end{equation}
where $\gamma_{\rm s}^{\ast}$ is defined in \eqref{best_gain}. Recalling that $\mathcal{N}=|\textit{S}|$, the cardinality of the active set $\textit{S}$, we can express \eqref{out_def} as:
\begin{align}\label{pgf_order}
\text{P}_{\rm out}(P,R) = \text{Pr}\left[ {\gamma_{\rm s}^{\ast}} < \frac{2^R-1}{\rho} \right] =  U_{\mathcal{N}}\left(F_{\gamma_{\rm s}}\left(\frac{2^R-1}{\rho}\right)\right)
\end{align}
using \eqref{eqn: cdf_gen}. It is clear that by comparing \eqref{pgf_order} for different user distributions, outage probability $\text{P}_{\rm out}(P,R)$ can be ordered at every value of $\rho$ and $R$ based on comparing their PGFs, also known as Laplace transform ordering. A similar property will be observed for the ergodic capacity and average BER metrics in Sections \ref{sec: erg_cap} and \ref{BER}, by using this LT ordering approach introduced in Section \ref{sec: LT_ordering}.

\section{Ergodic Capacity}\label{sec: erg_cap}
When $h_{{\rm s}_{i}}$ and $h_{{\rm p}_{i}}$ both vary rapidly over the duration of a codeword, system is in the so-called fast fading regime. We consider the ergodic capacity of the secondary system averaged over both fading and user distributions. We then study the asymptotic behavior of ergodic capacity with large mean number of SUs. The expression of the ergodic capacity of a multi-user system with deterministic number of users $N$ and average SNR $\rho$ is given by,
\begin{align}
\avcap = \int_0^{\infty} \log\left(1+\rho x \right) \dr F_{\gamma_{\rm s}}^N
(x)= \rho \int_0^{\infty} \frac{1-F_{\gamma_{\rm s}}^N (x)}{1+\rho x} \dr x .
\label{eqn: cap_av1}
\end{align}
where $\avcap$ is the ergodic capacity averaged over the fading channel. For the random number of users case, $N$ is a realization of a random variable $\mathcal{N}$, which is the number of users respecting the interference constraint. By using \eqref{eqn: cdf_gen} the ergodic capacity averaged across the user distribution can be expressed as, 
\begin{align} \label{eqn: cap1_asympt}
\ecap &= \text{E}_{{\gamma_{\rm s}^{\ast}}} [\log(1+\rho {\gamma_{\rm s}^{\ast})}] = \rho \int_0^{\infty} \frac{1-U_{\mathcal{N}}(F_{\gamma_{\rm s}}(x))}{1+\rho x} \dr x .
\end{align}

It can be shown that $\avcap$ in \eqref{eqn: cap_av1} is a c.m.d. function of $N$ \cite{6112148}. According to Theorem \ref{thm: one}, if two user distributions are LT ordered, so will their ergodic capacities. $\avcap$ is also a concave increasing function of $N$. Applying the Jensen's inequality and defining $\ld:=\text{E}[\mathcal{N}]$, we have 
\begin{equation} 
\ecap \leq \capen.
\end{equation}
Therefore, randomization of $N$ will always deteriorate the average ergodic capacity of a MUD system.

\subsection{Scaling Laws of Ergodic Capacity}\label{Ergodic_capacity}
To study how the number of active number of users $\mathcal{N}$ affects the average throughput of the system, we derive the scaling laws of the ergodic capacity for large average number of users $\ld$. 
Reference \cite{6112148} considers the Poisson distribution for $\mathcal{N}$ in a non-cognitive context and derives the scaling laws of ergodic capacity as $\ld \rightarrow \infty$. In this section, we generalize this result to a large family of user distributions and determine conditions under which similar scaling laws hold. Under a Rayleigh fading scenario, substituting $F_{\gamma_{\rm s}}(x) = 1-e^{-x}$ into \eqref{eqn: cap1_asympt} and assuming that mean value $\ld$ and variance $\sigma_{\mathcal{N}}^2$ of $\mathcal{N}$ exist, we have the following theorem:
\begin{thm}\label{thm: cogcap1}
The ergodic capacity averaged across the fading and user distribution, denoted as $\ecap$, has the following scaling law as $\ld \rightarrow \infty$, 
\begin{align} \label{eqn: cogcap1}
\ecap &=\rho \int_0^{\infty} \frac{1 - U_{\mathcal{N}}(1-e^{-x})}{1+\rho x} \dr x  \nonumber \\
& = \log\left(1+ \rho \log(\ld) \right) + O(1/{\sqrt{\log(\ld)}}),
\end{align}
\end{thm}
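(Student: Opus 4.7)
The plan is to sandwich \ecap between matching upper and lower bounds obtained from the PGF inequality \eqref{eqn: PGFbound} together with an elementary concentration estimate on $\mathcal{N}$. For the upper bound, substituting $z = 1 - e^{-x}$ into the left branch of \eqref{eqn: PGFbound} gives the pointwise bound $1 - U_{\mathcal{N}}(1-e^{-x}) \leq \lambda e^{-x}$, which combined with the trivial bound $1 - U_{\mathcal{N}}(1-e^{-x}) \leq 1$ yields the envelope $\min(1,\lambda e^{-x})$. Splitting the integral in \eqref{eqn: cap1_asympt} at $x = \log \lambda$, the piece on $[0,\log\lambda]$ is exactly $\log(1+\rho\log\lambda)$, while the tail is bounded by $\rho/(1+\rho\log\lambda) = O(1/\log\lambda)$, which is $o(1/\sqrt{\log\lambda})$.

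The lower bound is the delicate step, because neither branch of \eqref{eqn: PGFbound} is useful where $1 - U_{\mathcal{N}}(1-e^{-x})$ is close to $1$, i.e., for $x$ well below $\log\lambda$ (the second-order branch actually turns negative there). I would instead exploit concentration of $\mathcal{N}$ via Cantelli's inequality, $\Pr[\mathcal{N} \geq \lambda - c\sigma_{\mathcal{N}}] \geq c^2/(1+c^2)$, from which
\[
U_{\mathcal{N}}(1-e^{-x}) = \E[(1-e^{-x})^{\mathcal{N}}] \leq (1-e^{-x})^{\lambda - c\sigma_{\mathcal{N}}} + \frac{1}{1+c^2} \leq e^{-(\lambda - c\sigma_{\mathcal{N}}) e^{-x}} + \frac{1}{1+c^2}.
\]
Setting the cutoff $x_L := \log\lambda - \sqrt{\log\lambda}$ and choosing $c := (\log\lambda)^{1/4}\sqrt{\log\log\lambda}$, and assuming $\sigma_{\mathcal{N}} = o(\lambda/c)$ (readily satisfied by the binomial, Poisson-binomial and negative binomial families of interest in the paper), one has $(\lambda - c\sigma_{\mathcal{N}})e^{-x} \geq e^{\sqrt{\log\lambda}}/2$ uniformly on $[0,x_L]$, so the exponential term is super-polynomially small while the Cantelli term is $O(1/(\sqrt{\log\lambda}\,\log\log\lambda))$. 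Hence $1 - U_{\mathcal{N}}(1-e^{-x}) \geq 1 - O(1/(\sqrt{\log\lambda}\,\log\log\lambda))$ throughout $[0,x_L]$.

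Discarding the non-negative contribution from $[x_L,\infty)$ and integrating this lower bound on $[0,x_L]$ gives $\ecap \geq (1 - O(1/(\sqrt{\log\lambda}\,\log\log\lambda)))\log(1+\rho x_L)$. A Taylor expansion yields $\log(1+\rho x_L) = \log(1+\rho\log\lambda) - O(1/\sqrt{\log\lambda})$, and since $\log(1+\rho\log\lambda) = O(\log\log\lambda)$ the multiplicative error is also absorbed into $O(1/\sqrt{\log\lambda})$, producing $\ecap \geq \log(1+\rho\log\lambda) - O(1/\sqrt{\log\lambda})$. Combined with the upper bound, this proves the claim. The principal obstacle is matching the two rates: the upper bound falls out almost for free at $O(1/\log\lambda)$, but the lower bound demands a careful balance between the Cantelli threshold $c$ and the offset $g(\lambda) := \log\lambda - x_L$; the scaling $g \asymp \sqrt{\log\lambda}$ is precisely what forces the advertised $O(1/\sqrt{\log\lambda})$ remainder.
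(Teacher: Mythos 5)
Your construction is mechanically sound and reaches the same conclusion, but by a genuinely different route. The paper substitutes $y=e^{-x}$, cuts once at $y=\sqrt{\log\lambda}/\lambda$, uses the first-order (mean) branch of \eqref{eqn: PGFbound} to make the tail contribute $O(1/\sqrt{\log\lambda})$ (the same tool you use for your upper bound, just with a different cut), and then sandwiches the main region between $1-U_{\mathcal{N}}(0)$ and $1-U_{\mathcal{N}}\bigl(1-\sqrt{\log\lambda}/\lambda\bigr)$ by monotonicity, controlling the latter with the second-order (variance) branch of \eqref{eqn: PGFbound} applied to $\mathcal{N}/\lambda$ and the former with condition (a). You replace that second-moment PGF step by Cantelli's inequality applied directly to $\mathcal{N}$, yielding the pointwise bound $U_{\mathcal{N}}(1-e^{-x})\le e^{-(\lambda-c\sigma_{\mathcal{N}})e^{-x}}+1/(1+c^2)$ on all of $[0,x_L]$; this makes the origin of the $\sqrt{\log\lambda}$ rate very transparent (the offset $\log\lambda-x_L\asymp\sqrt{\log\lambda}$, which plays the role of the paper's cut point) and handles the event $\{\mathcal{N}=0\}$ automatically, so condition (a) never appears.

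The one substantive caveat is the hypothesis you actually use. Your choice $c=(\log\lambda)^{1/4}\sqrt{\log\log\lambda}$ requires $\sigma_{\mathcal{N}}=o\bigl(\lambda\,(\log\lambda)^{-1/4}(\log\log\lambda)^{-1/2}\bigr)$, i.e.\ $\sigma_{\mathcal{N}}^2=o\bigl(\lambda^2/(\sqrt{\log\lambda}\,\log\log\lambda)\bigr)$, which is strictly stronger than the stated condition (b) $\sigma_{\mathcal{N}}^2=o(\lambda^2)$: for instance $\sigma_{\mathcal{N}}=\lambda/\log\log\lambda$ satisfies (b) but makes $\lambda-c\sigma_{\mathcal{N}}$ negative, and no choice of $c$ gives both $c\sigma_{\mathcal{N}}=o(\lambda)$ and $\log\log\lambda/c^2=O(1/\sqrt{\log\lambda})$ under (b) alone. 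So as written you prove the scaling law for the binomial, NB and PB families of interest, but not the theorem literally under hypotheses (a)--(b). In fairness, the paper's own proof has the mirror-image weakness: under (a)--(b) as stated, the terms $U_{\mathcal{N}}(0)\log(1+\rho\log\lambda)$ and $(\sigma_{\mathcal{N}}^2/\lambda^2)\log(1+\rho\log\lambda)$ are only $o(1)$-type (the variance term in fact needs more than (b) even for that), not $O(1/\sqrt{\log\lambda})$, so a quantitative strengthening of exactly the kind you impose is what the advertised remainder really requires. You should simply state explicitly that you are replacing (a)--(b) by this stronger variance condition, and note that it subsumes (a) via Chebyshev/Cantelli.
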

provided that $(a) \text{Pr}\left[ \mathcal{N}= 0 \right] =o(1/\log\log \ld)$ and $(b) \sigma_{\mathcal{N}}^2=o({\ld}^2)$ as $\ld \rightarrow \infty$.
\begin{proof}
See Appendix A.
\end{proof}
Note that if $\mathcal{N}$ is not random (i.e., the number of users is deterministic), then $\mathcal{N} = \ld$ with probability one, which satisfies both assumption $(a)$ and $(b)$ in Theorem \ref{thm: cogcap1}. This implies that $\overline{C}(\rho,L) = O(\log\log L)$, as $L \rightarrow \infty$, as also observed in \cite[Theorem 5]{5454289}. Theorem \ref{thm: cogcap1} can be viewed as a generalization of this result.

\subsection{Binomial Distributed $\mathcal{N}$}\label{sec: bino_cap}
In our proposed cognitive radio system, one possible mode of operation to select a desired user can be expressed as follows: choose the user set among $L$ total users which satisfy the interference constraint $\textit{S}=\lbrace j \in 1,\ldots, L: \gamma_{{\rm p}_j}<Q \rbrace$. Then choose the user index in $\textit{S}$ with the best channel gain $\gamma_{{\rm s}_{i}}$. In another words, the user with highest $\gamma_{{\rm s}_{i}}$ which also satisfies the interference constraint will be selected. Recall that $\mathcal{N}=|\textit{S}|$, the cardinality of $\textit{S}$, which is the number of users satisfying the interference constraint, termed as successful users. Users will be said to be failures if they are not successful. If the interference test of each user is treated as an independent Bernoulli experiment, $\mathcal{N}$ is a binomial random variable. The success probability $p$ of this binomial random variable can be represented as $F_{\gamma_{\rm p}}(Q)$, where $F_{\gamma_{\rm p}}(x)$ is the CDF of $|h_{\rm p}|^2$ which is i.i.d. across all SUs.

We use Bin($L$,$p$) to denote the binomial distribution with $L$ trials and success probability $p$. Since $\mathcal{N}$ users are chosen from $L$ total users subject to an interference threshold $Q$, the random variable $\mathcal{N}$ follows Bin($L$,$F_{\gamma_{\rm p}}(Q)$). Consequently, using \eqref{eqn: cdf_gen}, and the PGF of the binomial distribution, the CDF of the channel gain of the selected user can be expressed as,
\begin{align}
F_{{\gamma_{\rm s}^{\ast}}}(x) &= [1-F_{\gamma_{\rm p}}(Q)+F_{\gamma_{\rm p}}(Q)F_{\gamma_{\rm s}}(x)]^L \nonumber \\
&= [1-p+p(1-e^{-x})]^{\frac{\lambda}{p}}   \nonumber \\
&= (1-pe^{-x})^{\frac{\lambda}{p}}
\end{align}
where $p :=F_{\gamma_{\rm p}}(Q)$ and $\lambda :=Lp$ is the mean value of random variable $\mathcal{N}$. It can be verified that in this case $\pr[\mathcal{N}=0] = \binom{L}{0} p^0(1-p)^{\frac{\ld}{p}}$ and $\sigma_{\mathcal{N}}^2 = \ld (1-p)$, which implies that $(a)$ and $(b)$ in Theorem \ref{thm: cogcap1} are satisfied. Therefore, \eqref{eqn: cogcap1} holds for the binomial case.

\subsection{Negative Binomial Distributed $\mathcal{N}$}\label{sec: neg_bino_cap}
The number of SUs could follow discrete distributions other than binomial if different modes of operation are adopted. In the binomial case, the primary receiver performs an exhaustive search to find all active SUs among $L$ total users. When $L$ is large, this approach might require a long processing time to form the active SUs set $\textit{S}$. We term the processing time as system delay, which is in proportion to the number of SUs which has been checked for interference constraint. An alternative is to decrease the system delay by selecting the desired user from a proper subset among all users whose interference are below the threshold. 

For example, the BS can form the set $\textit{S}$ sequentially as follows. The BS selects all the active users before a predetermined number $r$ failures occurs. In this case, $\mathcal{N}$ is NB distributed with parameter $r$ and $p$, which is denoted as NB($r$,$p$). There exists a trade-off between the time BS takes to form the set $\textit{S}$ and the secondary link performance, which can be balanced by the parameter $r$. In this case, system delay is a random variable and its mean value is in proportion to $r$.

CDF of the channel gain of the best user selected from a NB random set of users can be written as using \eqref{eqn: cdf_gen} as:
\begin{align}\label{eqn: cdf_nb}
F_{{\gamma_{\rm s}^{\ast}}}(x) = \frac{1}{(1+e^{-x}u)^r},
\end{align}
where $u :=F_{\gamma_{\rm p}}(Q)/(1-F_{\gamma_{\rm p}}(Q))$, $r :=\ld/u$. Similar to the binomial $\mathcal{N}$, the conditions of Theorem \ref{thm: cogcap1} are satisfied since $\pr[\mathcal{N} = 0] = {\binom{r-1}{0}}p^0(1-p)^r$ and $\sigma_{\mathcal{N}}^2 = \ld/(1-p)$, hence \eqref{eqn: cogcap1} also holds in the NB case.

\subsection{Poisson-Binomial Distribution}\label{sec: Poisbino}
In practical systems, SUs might not necessarily suffer an interference probability that is identical across all users. Therefore, the case where SUs have different $F_{\gamma_{{\rm p}_i}}(Q)$ is of interest. In this case, the number of active SUs follows a PB distribution, which is mathematically defined as the sum of non identically distributed independent Bernoulli random variables $X_i$ so that 
\begin{equation}
\pr \left[ X_i=1\right] = p_i = 1 - \pr\left[ X_i=0\right] > 0, \hspace{0.4 in} i= 1,...,L.
\end{equation}
Let $\mathcal{W}=\sum_{i=1}^{L} X_i$ be the number of the active users among total SUs, then $\mathcal{W}$ will have a PB distribution. It is verified in Appendix C that condition $(a)$ and $(b)$ are also satisfied in this case, so that \eqref{eqn: cogcap1} holds. Furthermore, this user distribution will be studied in Section \ref{sec: poisapprox} and approximated by the Poisson distribution when $F_{\gamma_{p_i}}(Q)$ is small and all $X_i$ are independent.

\section{Average Bit Error Rate}\label{BER}
Average error rate is another key performance metric. The error rate at average SNR $\rho$ averaged over the fading and users distribution is given by
\begin{equation} \label{eqn: avg_BER1}
\text{E}_{\mathcal{N}} \left[\overline{\text{P}}_{\rm e}(\rho, \mathcal{N}) \right] = \text{E}_{\mathcal{N}} \left[ \int_0^{\infty} \berx \dr F_{\gamma_{\rm s}}^N (x) \right]
\end{equation}
where $\berx$ is the instantaneous error rate over an AWGN channel for an instantaneous SNR $\rho x$ of the best user. $\berx$ is often approximated to have the form of $\berx = \al e^{-\eta \rho x}$, where $\alpha$ and $\eta$ can be chosen to capture different modulation schemes. Other variations such as $\berx = \alpha Q(\sqrt{\eta \rho x})$ is also adopted in literature \cite{2}.
 
To see that $\avber$ is a c.m. function in $N$, consider the $k^{th}$ derivative
\begin{equation}\label{eqn: ber_dr1}
\frac{\du^{k}\avber}{\du N^{k}} = \rho \int_0^{\infty} B(\rho x)
F_{\gamma_{\rm s}}^N(x) \left[\log\left(\cdfrn\right)\right]^k \dr x,
\end{equation}
where we define $B(x)= -\dr \text{P}_e(x)/\dr x$. Since $\berx$ is decreasing in $x$ for any $\rho>0$ and $\log\left(\cdfrn \right) \leq 0$, the derivative in \eqref{eqn: ber_dr1} alternates in sign as $k$ incremented and satisfies the definition in \eqref{eqn: math_prelim_one}. Consequently, $\avber$ is a c.m. function of $N$. In Section \ref{sec: user_order}, this c.m. property along with Theorem \ref{thm: one} will be used to show that stochastic order on a pair of user distributions can be shown to order the average bit error rate under those user distributions. In particular, $\avber$ being a c.m. function of $N$ means that \eqref{eqn: ber_dr1} is negative for $k=1$ and positive for $k=2$, and consequently $\avber$ is a convex decreasing function of $N$. For the case that the number of users in the system is random, by applying Jensen's inequality, we have,
\begin{equation}\label{eqn: ber12}
\eber \geq \beren,
\end{equation}
where $\ld := \text{E}[\mathcal{N}]$. Therefore, randomization of the number of users always deteriorates the average error rate performance of a multiple SUs cognitive radio systems. In Section \ref{sec: jensen} we will show that the Jensen's inequality in \eqref{eqn: ber12} is tight for large $\ld$ and Poisson $\mathcal{N}$.

\subsection{Binomial Distributed $\mathcal{N}$}\label{sec: bino_ber}
In Section \ref{sec: bino_cap}, we derived the CDF of the channel gain of the best user chosen from a binomial distributed random set of users. Here we take derivative of \eqref{eqn: cdf_gen} with respect to $x$ so that the PDF of the channel gain of the best user in the binomial case can be expressed as:
\begin{align}\label{eqn: binpdf}
f_{{\gamma_{\rm s}^{\ast}}}(x) = \frac{\dr F_{{\gamma_{\rm s}^{\ast}}}(x)}{\dr x} = \lambda e^{-x}(1-e^{-x}p)^{\frac{\lambda}{p}-1}, \hspace{6mm} \quad x >0.
\end{align}
where we recall that $p:=F_{\gamma_{\rm p}}(Q)$.
Assuming the instantaneous error rate has the form $\berx = \alpha e^{-\eta \rho x}$ , substituting \eqref{eqn: binpdf} into \eqref{eqn: avg_BER1} we get:
\begin{align}\label{eqn: binber}
\eber &= \int_0^{\infty} \alpha e^{-\eta \rho x}e^{-x}(1-e^{-x}p)^{\frac{\lambda}{p}-1}\dr x \nonumber \\
&=\alpha p^{-1-\eta \rho} \lambda \beta\left(p,1+\eta \rho, \frac{\lambda}{p}\right)
\end{align}
where the incomplete beta function is defined as $\beta\left(x,a,b \right)=\int_0^x y^{a-1}(1-y)^{b-1} \dr y$. Note that when $p=1$ in \eqref{eqn: binber}, every SU satisfies the interference constraints, in which case $\mathcal{N}$ is deterministic. In this specific case, \eqref{eqn: binber} equals $\alpha \lambda B(1+\eta \rho,\lambda)$, which can be shown as the average BER under deterministic number of active users. Here $B(1+\eta \rho,\lambda)=\beta\left(1,1+\eta \rho, \lambda\right)$ is the beta function.

\subsection{Negative Binomial Distributed $\mathcal{N}$}\label{sec: neg_bino_ber}

In Section \ref{sec: neg_bino_cap}, we derived the CDF of the channel gain of the best user chosen from a NB distributed set of users. The PDF of the channel gain of the best user in the NB case can be expressed as:
\begin{align}\label{eqn: nbpdf}
f_{{\gamma_{\rm s}^{\ast}}}(x) = \frac{\dr F_{{\gamma_{\rm s}^{\ast}}}(x)}{\dr x} = rue^{-x}(1+ue^{-x})^{-1-r}, \hspace{6mm} \quad x >0.
\end{align}
where $r$ is the parameter of the NB distribution and $u=p/(1-p)$.
Assuming that the instantaneous error rate has the form $\berx = \alpha e^{-\eta \rho x}$, substituting \eqref{eqn: nbpdf} into \eqref{eqn: avg_BER1} we can get:
\begin{align}
\eber &= \int_0^{\infty} \alpha e^{-\eta \rho x}rue^{-x}(1+ue^{-x})^{-1-r} \dr x \nonumber \\
&=\frac{ru\alpha}{1 + \eta \rho} {}_2F_1\left( 1 + r, 1 + \eta \rho, 2 + \eta \rho, -u\right)
\end{align}
where ${}_2F_1\left(a,b,c,z\right)$ is Gauss's hyper geometric function. 
As number of failures $r$ is incremented, average BER performance improves. However, for increased $r$, the time BS takes to form set $\textit{S}$ will also be increased, so that one can balance the performance and delay trade off by adjusting the $r$ parameter.

\section{Non-homogeneous Interference Probability and Poisson Approximation}\label{sec: poisapprox}
We have introduced in Section \ref{sec: System_model} that the interference test of each SU is treated as an independent Bernoulli experiment with success probability $F_{\gamma_{\rm p}}(Q)$. In this section, the interference model will be generalized to the non-i.i.d case, in which the number of active SUs results in a PB distribution following the definition in Section \ref{sec: Poisbino}. Since it is mathematically complicated to calculate the ergodic capacity and average BER of the SU system in this case, a Poisson approximation will be utilized to approximate PB distribution.

\subsection{Poisson Approximation}
In this section, we will bound the error between the ergodic capacity under Poisson and PB $\mathcal{N}$ to show that as the PB distribution converges to Poisson distribution, the ergodic capacity $\ecap$ under PB $\mathcal{N}$ also converges to the ergodic capacity at the Poisson case.

Following the definition in Section \ref{sec: Poisbino} then $\mathcal{W}$ will have a distribution that is approximately Poisson with mean $\ld=\sum_{i=1}^{L} p_i$. This approximation will hold if $F_{\gamma_{{\rm p}_i}}(Q)$ is small and all $X_i$ are independent. We will now make this rigorous and bound the error between the ergodic capacity under a PB user distribution $\mathcal{W}$ and its corresponding Poisson approximated $\mathcal{N}$ \cite{la_cam}. First, consider the following theorem by Le Cam \cite{la_cam}:
\begin{thm}\label{thm: pois_error1}
$X_1, \ldots, X_i$ are independent random variables, each with a Bernoulli distribution of parameter $p_i$. $\text{Pr}\left[ X_i=1\right] = p_i$ for all $i= 1, \ldots,L$, i.e. $\mathcal{W}=\sum_{i=0}^{\infty} X_i$ approximately follows a PB distribution. We have
\begin{equation}\label{eqn: error_dist}
\sum_{k=0}^{\infty} \left| \text{Pr}\left[\mathcal{W}=i\right] - \frac{e^{-\ld}{\ld}^{i}}{i!} \right| \leq 2\sum_{i=1}^{L} {p_i}^2, \hspace{6mm} \quad i=1,2,\ldots,L
\end{equation}
where $\ld=\sum_{i=1}^{L}p_i$.
\end{thm}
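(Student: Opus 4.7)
The plan is Le Cam's coupling argument. Observing that the left-hand side of \eqref{eqn: error_dist} equals $2\,d_{TV}(\mathcal{L}(\mathcal{W}),\mathrm{Poisson}(\ld))$, where $d_{TV}$ denotes total variation distance, it suffices to bound this distance by $\sum_{i=1}^L p_i^2$. Since independent Poisson variables add to a Poisson with summed mean, the strategy is to construct on a common probability space independent $Y_i\sim \mathrm{Poisson}(p_i)$ coupled tightly to the given $X_i\sim\mathrm{Bernoulli}(p_i)$, so that $\mathcal{W}=\sum_i X_i$ is close in distribution to $Z:=\sum_i Y_i\sim\mathrm{Poisson}(\ld)$.

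First, for each $i$ separately I would build a \emph{maximal} coupling of $X_i$ and $Y_i$, i.e.\ a joint law that maximizes $\pr[X_i=Y_i]=\sum_k \min\{\pr[X_i=k],\pr[Y_i=k]\}$. Using the inequality $e^{-p_i}\geq 1-p_i$ for $p_i\in[0,1]$, the overlap masses at $k=0$ and $k=1$ evaluate to $1-p_i$ and $p_i e^{-p_i}$, while the Bernoulli contributes nothing for $k\geq 2$. This yields $\pr[X_i\ne Y_i]=p_i(1-e^{-p_i})$, which in turn is bounded above by $p_i^2$ via the elementary inequality $1-e^{-x}\leq x$. I then declare the $L$ couplings to be mutually independent; this preserves the correct joint law of $(Y_1,\ldots,Y_L)$ and hence the Poisson additivity that gives $Z\sim\mathrm{Poisson}(\ld)$.

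Second, a union bound gives $\pr[\mathcal{W}\ne Z]\leq \sum_{i=1}^L \pr[X_i\ne Y_i]\leq \sum_{i=1}^L p_i^2$. Combining this with the coupling inequality $d_{TV}(\mathcal{L}(\mathcal{W}),\mathcal{L}(Z))\leq \pr[\mathcal{W}\ne Z]$ and the identity $2\,d_{TV}(\mathcal{L}(\mathcal{W}),\mathcal{L}(Z))=\sum_k|\pr[\mathcal{W}=k]-\pr[Z=k]|$ produces \eqref{eqn: error_dist} after multiplication by $2$.

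The only non-routine ingredient is the per-index bound $p_i(1-e^{-p_i})\leq p_i^2$, which is elementary once one recognizes $1-e^{-x}\leq x$. The step deserving the most care is the maximal coupling construction itself---verifying $e^{-p_i}\geq 1-p_i$ so that the $k=0$ overlap is exactly $1-p_i$, and checking that mutual independence across $i$ can be imposed on the joint coupling without disturbing any marginal. Everything else (union bound, coupling inequality, additivity of the Poisson law) is standard and requires no separate analysis.
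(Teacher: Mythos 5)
Your proof is correct: the maximal-coupling overlap computation $\pr[X_i\ne Y_i]=p_i(1-e^{-p_i})\le p_i^2$, independence across indices, the union bound, and the identity between the $\ell_1$ distance and twice the total variation distance together give exactly the stated bound $2\sum_{i=1}^L p_i^2$. The paper itself offers no proof of this theorem---it is quoted directly from Le Cam's work \cite{la_cam}---and your coupling argument is precisely the standard proof of that inequality, so there is nothing to reconcile beyond noting that you have supplied the argument the paper omits.
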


Using Theorem \ref{thm: pois_error1}, we will bound the gap between the ergodic capacity under PB and Poisson distributions, which is denoted as $\Delta_C$. We have:
\begin{align}
\Delta_{C} &= \left| \text{E}_{\mathcal{W}} \left[\overline{C}(\rho,\mathcal{W}) \right]-\text{E}_{\mathcal{N}}\left[\overline{C}(\rho,\mathcal{N}) \right] \right| \nonumber \\
&= \left| \sum_{i=1}^{L} \overline{C}(\rho,i)\left( \text{Pr}\left[\mathcal{W}=i\right] - \frac{e^{-\ld}{\ld}^{i}}{i!} \right) \right| \nonumber \\
&\leq \sum_{i=1}^{L} \overline{C}(\rho,i) \left| \left( \text{Pr}\left[\mathcal{W}=i\right] - \frac{e^{-\ld}{\ld}^{i}}{i!} \right) \right|. \nonumber
\end{align}
Since $\overline{C}(\rho,i)$ is increasing in $i$, and $\overline{C}(\rho,L) = O\left(\log\log L \right)$ as we mentioned in Section \ref{Ergodic_capacity}, applying \eqref{eqn: error_dist} we have:
\begin{align}
\Delta_{C} &= O\left(\log\log L \sum_{i=1}^{L} {p_i^2} \right) 
\end{align}
where $i=1,2,\ldots,L$. As long as $\sum_{i=1}^{L} {p_i^2} = o(1/(\log\log L))$, the error between the capacity under PB and Poisson distributions goes to zero as $L \rightarrow \infty$.

For a special case consider $p_i=\ld/L$ for $i=1,2,\ldots,L$, in which all the SUs have i.i.d. interference channels, $\mathcal{W}$ follows a binomial distribution. In this case, we have
\begin{align}
\Delta_{C} = O \left(\frac{\ld^2}{L} \log\log L \right)
\end{align}
Obviously, as $L \rightarrow \infty$ and $p \rightarrow 0$, $\Delta_{C}$ approaches zero. Consequently, the gap between the binomial and the approximated Poisson capacity is shown to be negligible as total number of users grows large and the interference probability is sufficiently small. This will be illustrated numerically in Section \ref{sec: simulations}.

\subsection{Tightness in the Jensen's Inequality in the Average BER}\label{sec: jensen}
Since in Section \ref{sec: poisapprox} we proved that Poisson distribution can be utilized to precisely approximate PB distribution, it is of interest to study the average BER under Poisson $\mathcal{N}$. In Section \ref{BER}, we proved that the average BER $\avber$ is a completely monotonic function of $N$, which implies the convexity. Applying the Jensen's inequality, we have \eqref{eqn: ber12}.

We now provide sufficient conditions for Jensen's inequality involving $\avber$ to be asymptotically tight for large $\ld$. Recall that $\avber$ is the error rate averaged over the channel distribution for deterministic number of users $N$. To this end, we use \cite[Theorem 2.2]{Downey93anabelian} which were derived in a networking context for arbitrary {\it c.m.} functions.

\begin{thm}\label{thm: four}
Let $\avber$ be c.m. and regularly varying at $N=\infty$ and consider the error rate averaged across the channel and the users $\eber$, where $\mathcal{N}$ is a Poisson distributed random variable with mean $\ld$. Then, 
\begin{equation}\label{eqn: jsens_tight}
\eber = \beren+O\left(\overline{\text{P}}_{\rm e}(\rho, \ld)/\ld \right)
\end{equation}
as $\ld \rightarrow \infty$.
\end{thm}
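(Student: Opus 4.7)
My plan is to follow the structure of an Abelian-type argument in the spirit of the Downey reference cited in the statement, exploiting the c.m.\ structure of $\avber$ together with Bernstein's theorem and the Tauberian theorem of Section \ref{sec: reg_vary}.

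First, since $\avber$ is a c.m.\ function of $N$, Bernstein's theorem \eqref{eqn: math_prelim_two} gives a representing non-decreasing measure $\psi$ such that
\begin{equation}
\overline{\text{P}}_{\rm e}(\rho,N) = \int_0^{\infty} e^{-Ns}\, d\psi(s).
\end{equation}
For Poisson $\mathcal{N}$ with mean $\ld$, the PGF evaluated at $z=e^{-s}$ yields $\text{E}[e^{-\mathcal{N}s}] = e^{\ld(e^{-s}-1)}$, so Fubini gives
\begin{equation}
\eber - \beren = \int_0^{\infty}\bigl[e^{\ld(e^{-s}-1)} - e^{-\ld s}\bigr]\, d\psi(s).
\end{equation}
This reduces the theorem to controlling the integrand uniformly in $s$ and relating the bound back to $\beren$.

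Next, I would bound the bracket pointwise. Writing $e^{-s}-1 = -s + (e^{-s}-1+s)$ and factoring,
\begin{equation}
e^{\ld(e^{-s}-1)} - e^{-\ld s} = e^{-\ld s}\bigl(e^{\ld(e^{-s}-1+s)} - 1\bigr),
\end{equation}
and since $0 \le e^{-s}-1+s \le s^2/2$ for $s\ge 0$, an elementary estimate gives $|e^{\ld(e^{-s}-1)} - e^{-\ld s}| \le \tfrac{1}{2}\ld s^2 e^{-\ld s}\, e^{\ld s^2/2}$. Splitting the integral at $s=\delta/\sqrt{\ld}$ for a small constant $\delta$, the contribution from $s < \delta/\sqrt{\ld}$ is bounded by $\tfrac{1}{2}(1+o(1))\ld\int_0^\infty s^2 e^{-\ld s}\,d\psi(s) = \tfrac{\ld}{2}(1+o(1))\,\partial_\ld^{2}\beren$, while the tail is negligible by the exponential factor $e^{-\ld s}$ and the integrability of $\psi$ against $d e^{-s}$ implied by finiteness of $\overline{\text{P}}_{\rm e}(\rho,N)$.

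Finally, I would invoke the regular variation hypothesis. Since $\avber$ is regularly varying at $N=\infty$ with some exponent $-\mu$, Theorem \ref{tbr_th} implies $\psi(s)$ is regularly varying at $s=0$ with exponent $\mu$, and the monotone density/Karamata theorems then give
\begin{equation}
\ld\,\partial_\ld^{2}\beren \;\sim\; \frac{\mu(\mu+1)}{\ld}\,\beren,
\end{equation}
so the leading difference is $O(\beren/\ld) = O(\overline{\text{P}}_{\rm e}(\rho,\ld)/\ld)$, as claimed.

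The main obstacle I expect is the last step: justifying the differentiation-of-Laplace-transform estimate for regularly varying $\psi$ without extra smoothness assumptions, and stitching together the two ranges so that the remainder is genuinely $O(\beren/\ld)$ rather than a weaker $o(\beren)$ bound. This is precisely the content of the Abelian theorem for c.m.\ functions cited from Downey, so in a write-up I would either quote that theorem directly or reproduce the Karamata-style monotone-density argument and verify its hypotheses for $\psi$ extracted from $\avber$.
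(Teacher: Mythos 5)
Your proposal is correct in outline, but note that the paper does not actually prove Theorem \ref{thm: four}: it simply quotes \cite[Theorem 2.2]{Downey93anabelian}, an Abelian theorem for c.m.\ functions, and spends its own effort (Theorem \ref{thm: five} and the examples after it) only on verifying the regular-variation hypothesis for specific error-rate/fading pairs. So what you have written is a genuine reconstruction of the cited result rather than a rederivation of anything in the text, and the skeleton is the right one: Bernstein representation, $\text{E}[e^{-s\mathcal{N}}]=e^{\ld(e^{-s}-1)}$ for Poisson $\mathcal{N}$, the pointwise bound $0\le e^{\ld(e^{-s}-1)}-e^{-\ld s}\le \tfrac{1}{2}\ld s^{2}e^{-\ld s}e^{\ld s^{2}/2}$, a split at $s=\delta/\sqrt{\ld}$, and an estimate of $\partial_{\ld}^{2}\beren$. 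Two places need tightening in a full write-up. First, the Fubini step and the tail range $s\ge 1$ require $\psi$ to have finite total mass, i.e.\ $\overline{\text{P}}_{\rm e}(\rho,0)<\infty$; this is automatic because $\eber$ is assumed to exist and $\pr[\mathcal{N}=0]=e^{-\ld}>0$, but it should be said. With that, the middle range is handled by $e^{\ld(e^{-s}-1)}\le e^{-\ld s/2}\le e^{-\delta\sqrt{\ld}/4}\,e^{-\ld s/4}$ for $\delta/\sqrt{\ld}\le s\le 1$, giving a contribution at most $e^{-\delta\sqrt{\ld}/4}\,\overline{\text{P}}_{\rm e}(\rho,\ld/4)=o\left(\beren/\ld\right)$ by regular variation, and the range $s\ge1$ is $O(e^{-(1-e^{-1})\ld})$; your appeal to ``the exponential factor'' needs to be spelled out in this form precisely because the comparison is against a polynomially decaying $\beren$. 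Second, the step you flag as the main obstacle can be bypassed: since $e^{-\ld s/2}\ge \tfrac{1}{8}\ld^{2}s^{2}e^{-\ld s}$, one has $\partial_{\ld}^{2}\beren=\int_{0}^{\infty}s^{2}e^{-\ld s}\,d\psi(s)\le 8\,\overline{\text{P}}_{\rm e}(\rho,\ld/2)/\ld^{2}$, and regular variation of $\avber$ at $N=\infty$ gives $\overline{\text{P}}_{\rm e}(\rho,\ld/2)=O(\beren)$, so $\ld\,\partial_{\ld}^{2}\beren=O(\beren/\ld)$ with no monotone-density or Karamata-constant machinery; your sharper asymptotic $\ld\,\partial_{\ld}^{2}\beren\sim\mu(\mu+1)\beren/\ld$ (exponent $-\mu$, $\mu>0$) is also correct but is more than the $O(\cdot)$ statement requires. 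With these patches your argument is complete and self-contained, which is arguably more informative than the paper's bare citation.
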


Equation \eqref{eqn: jsens_tight} shows that as $\ld \rightarrow \infty$, the difference between the error rate averaged across the user distribution and the error rate evaluated at the average number of users vanishes as $\ld$ tends to $\infty$. This implies that for sufficiently large $\ld$ the performance of the MUD systems with random number of users will be almost equal to the performance of the MUD systems with a deterministic number of users with the number of users equal to $\ld$.

To apply Theorem \ref{thm: four} we require $\avber$ to be c.m. and regularly varying. We have already shown in Section \ref{BER} that $\avber$ is always completely monotonic in $N$. Next, we provide the conditions under
which $\avber$ is a regularly varying function of $N$. Consider 
\begin{equation}
\avber = \rho \int_0^{\infty} B(\rho x) e^{N \log \cdfrn} \dr x
\end{equation}
where $B(\cdot)$ is defined as $B(x)= -\dr \text{P}_e(x)/\dr x$.
Now, setting $u := -\log(\cdfrn)$, and integrating by substitution we have,
\begin{equation} \label{eqn: downey_res3}
\avber = \rho \int_0^{\infty} \frac{B(\rho \cdfinv) e^{-u} e^{-u N}
\dr u}{f_{\gamma_{\rm s}}(\cdfinv)},
\end{equation}
where $F_{\gamma_{\rm s}}^{-1}(x)$ is the inverse CDF and $f_{\gamma_{\rm s}}(x)$ is the PDF of $\gamma_{\rm s}$. We now establish the sufficient conditions for $\avber$ to be a regularly varying function of $N$:

\begin{thm}\label{thm: five}
If $\avber$ is c.m. in $N$, a sufficient condition for it to be
regularly varying at $N=\infty$ is that, $t(u):=\rho (B(\rho
\cdfinv)e^{-u})/(f_{{\gamma_{\rm s}}}\left(\cdfinv\right))$ is regularly
varying at $u=0$.
\end{thm}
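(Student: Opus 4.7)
The plan is to recognise \eqref{eqn: downey_res3} as a Laplace--Stieltjes transform with a non-decreasing spectral measure and then chain Karamata's theorem on integrals of regularly varying functions with the Tauberian statement in Theorem \ref{tbr_th}.

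First I would set $\psi(u):=\int_0^u t(v)\,\dr v$, so that \eqref{eqn: downey_res3} reads $\avber=\int_0^{\infty} e^{-uN}\,\dr\psi(u)$. The function $\psi$ is non-decreasing because $t(u)\ge 0$ on $(0,\infty)$: the factor $e^{-u}$ is positive, $f_{\gamma_{\rm s}}(\cdfinv)>0$ for $u>0$, and $B(x)=-\dr\text{P}_e(x)/\dr x\ge 0$ since any physically meaningful instantaneous error rate is decreasing in its argument (both $\alpha e^{-\eta\rho x}$ and $\alpha Q(\sqrt{\eta\rho x})$ satisfy this). Hence $\psi$ is an admissible spectral measure in the sense of Theorem \ref{tbr_th}.

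Next, I would promote regular variation of $t$ at $u=0$ to regular variation of $\psi$ at $u=0$ via Karamata's integral theorem. Writing $t(u)=u^{\alpha}\ell(u)$ with $\ell$ slowly varying at $0$, finiteness of $\avber$ at $N=0$ enforces integrability of $t$ near the origin, and hence $\alpha>-1$. Karamata's theorem then yields
\[
\psi(u)\sim\frac{u^{\alpha+1}\,\ell(u)}{\alpha+1},\qquad u\to 0^{+},
\]
so $\psi$ is regularly varying at $s=0$ with exponent $\mu:=\alpha+1>0$. Applying Theorem \ref{tbr_th} to the pair $(\psi,\avber)$ transfers this directly into regular variation of $\avber$ at $N=\infty$ with exponent $-\mu$, which is exactly the conclusion.

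The role of the c.m.\ hypothesis on $\avber$ is precisely to put the error rate in the Laplace-transform form on which Theorem \ref{tbr_th} is predicated; beyond that, the argument is a direct chaining of two classical results. I expect the main obstacle to be the Karamata step: one must verify $\alpha>-1$ strictly (so that $\mu\neq 0$ and the notion of regular variation in Section \ref{sec: reg_vary} genuinely applies), but this follows automatically from $\avber<\infty$ at $N=0$, so no further calculation is needed.
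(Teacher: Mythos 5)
Your argument is correct and follows essentially the same route as the paper: it recognizes \eqref{eqn: downey_res3} as the Bernstein/Laplace representation of the c.m.\ function $\avber$ with spectral measure $\dr\psi(u)=t(u)\,\dr u$ and then invokes the Tauberian Theorem~\ref{tbr_th}. Your explicit Karamata step, upgrading regular variation of the density $t$ at $u=0$ (exponent $\alpha$) to regular variation of $\psi(u)=\int_0^u t(v)\,\dr v$ at $0$ (exponent $\alpha+1$), spells out a detail the paper leaves implicit and is a sound addition; the only loose end is that integrability near $0$ gives $\alpha\geq -1$ rather than strictly $\alpha>-1$, but this borderline case is equally unaddressed in the paper's own proof.
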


\begin{proof}
By comparing the representation of $\avber$ in \eqref{eqn: downey_res3} with the Bernstein's representation of c.m. functions discussed after \eqref{eqn: math_prelim_one}, it can be seen that
\eqref{eqn: downey_res3} can be represented as the Laplace transform of $t(u)$. Using Theorem \ref{tbr_th}, the proof follows.
\end{proof}

Theorem \ref{thm: five} shows that for the conclusions of Theorem \ref{thm: four} to hold (i.e., Jensen's inequality to be asymptotically tight), the CDF of the single-user channel $F_{\gamma_{\rm s}}(x)$, and the error rate expression $\berx$ have to jointly satisfy the regular variation condition given in Theorem \ref{thm:
five}. Next, we examine whether this condition holds for commonly assumed instantaneous error rates $\berx$ with $\gamma_{\rm s}$ being exponentially distributed. For the case of $\berx= \alpha e^{-\eta
\rho x}$, we have $t(u)= \alpha \rho (1-\eu)^{\eta \rho-1} \eu$, which satisfies $\lim_{u \rightarrow 0} t(\kappa u)/t(u) = \kappa^{\eta \rho-1}$, therefore proving the regular variation of $t(u)$ at $0$. By using Theorem \ref{tbr_th} this in turn proves regular variation of $\avber$ at $N=\infty$. Therefore $\avber$ is both a c.m. and a regularly varying function of $N$ for this case. Consequently, when $\berx= \alpha e^{-\eta \rho x}$ and the fading is Rayleigh (i.e. channel gain is exponential), the difference in error rate performance of a MUD system with a random number of users averaged over the number of users distribution and of a deterministic number users approaches zero for sufficiently large
$\ld$, as in Theorem \ref{thm: four}.

Consider now $\berx = \alpha Q(\sqrt{\eta \rho x})$, with $\gamma_{\rm s}$ being exponentially distributed. The error rate can be expressed as,
\begin{equation} \label{eqn: ex_exp1}
\avber = \alpha \int_0^{\infty} Q\left(\sqrt{\eta \rho x}\right)\dr
F_{\gamma_{\rm s}}^N(x) = \frac{\alpha \sqrt{\eta \rho}}{2 \sqrt{2\pi}}
\int_0^{\infty} \frac{e^{N\log\left(1-e^{-x}\right)} e^{-\eta \rho
x/2}}{\sqrt{x}} \dr x,
\end{equation}
where the second equality is obtained by integration by parts. Once again, by setting $u = -\log(1-e^{-x})$ we can rewrite \eqref{eqn: ex_exp1} as,
\begin{equation}
\frac{\alpha \sqrt{\eta \rho}} {2\sqrt{2\pi}} \int_0^{\infty}
\exp\left(-Nu\right) (1-e^{-u})^ {\eta \rho/2-1}
\frac{e^{-u}}{\sqrt{-\log(1-e^{-u})}} \dr u.
\end{equation}
Thus we have $t(u)=\alpha \sqrt{\eta \rho} (1-e^{-u})^{\eta \rho/2-1}e^{-u}/(2\sqrt{-2\pi \log(1-e^{-u})})$ and it can be shown that $\lim_{u \rightarrow 0} t(\kappa u)/t(u) = \kappa^{\eta \rho/2-1}$, therefore once again proving that $\avber$ is both a c.m. and a regularly varying function of $N$. Having verified the conditions of Theorem \ref{thm: five} for $\berx = \alpha Q(\sqrt{\eta \rho x})$ with $\gamma_{\rm s}$ being exponentially distributed, we conclude the tightness of Jensen's inequality as suggested by Theorem \ref{thm: four}.

\section{Laplace Transform Ordering of User Distributions}\label{sec: user_order}
We know from Jensen's inequality that a deterministic number of SUs will always outperform a random number of SUs both for average BER and ergodic capacity. Moreover, different random SU distributions can also be ordered among themselves. In this section, we introduce Laplace transform (LT) ordering, a method to compare the effect that different user distributions has on the average error rate, ergodic capacity, or other metrics that are either c.m. or c.m.d. in the number of active users. From \cite{6112148} we know that ergodic capacity is c.m.d. and averaged BER is c.m.. Consequently, Theorem \ref{thm: one} implies that if the number of users is from a distribution that can be ordered in the LT sense, then both the average error rate and capacity can be ordered at every value of SNR $\rho$.

\begin{thm}\label{thm: binopois}
Let $\mathcal{X}$ denote a Poisson random variable with parameter $\lambda$, $\mathcal{Y}$ denotes a binomial random variable with mean value $Lp$, and $\mathcal{Z}$ denote a NB random variable with mean value $rp/(1-p)$, and $\mathcal{W}$ denote a PB random variable defined in Theorem \ref{thm: pois_error1}. By assuming equal mean for all distributions, that is $Lp=\lambda=rp/(1-p)=\sum_{i=1}^L p_i$, we have $U_{\mathcal{Z}}(z) \geq U_{\mathcal{X}}(z)\geq U_{\mathcal{Y}}(z) \geq U_{\mathcal{W}}(z)$, for $0 \leq z \leq 1$. In other words
\begin{equation} \label{eqn: LTorderbino}
\mathcal{Z} \leq_{\rm Lt} \mathcal{X} \leq_{\rm Lt} \mathcal{Y} \leq_{\rm Lt} \mathcal{W}.
\end{equation}
\end{thm}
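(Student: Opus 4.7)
The plan is to reduce the LT ordering chain to a chain of pointwise inequalities on the probability generating functions, using the equivalence noted after Theorem 1: for discrete, non-negative integer valued $\mathcal{X}$ and $\mathcal{Y}$, $\mathcal{X} \leq_{\rm Lt} \mathcal{Y}$ iff $U_{\mathcal{X}}(z) \geq U_{\mathcal{Y}}(z)$ for all $z \in [0,1]$. So I would first write out the four PGFs explicitly, namely $U_{\mathcal{X}}(z) = e^{-\lambda(1-z)}$, $U_{\mathcal{Y}}(z) = (1 - p(1-z))^{L}$, $U_{\mathcal{Z}}(z) = \bigl((1-p)/(1-pz)\bigr)^{r}$, and $U_{\mathcal{W}}(z) = \prod_{i=1}^{L}(1 - p_i(1-z))$, and then verify the three pairwise inequalities $U_{\mathcal{Y}}(z) \geq U_{\mathcal{W}}(z)$, $U_{\mathcal{X}}(z) \geq U_{\mathcal{Y}}(z)$, and $U_{\mathcal{Z}}(z) \geq U_{\mathcal{X}}(z)$ one at a time for $z \in [0,1]$.

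For the PB--Binomial inequality I would invoke Theorem \ref{thm: schur} with $g(q) = 1 - q(1-z)$ for a fixed $z \in [0,1]$. A direct computation of $(\log g)''(q) = -(1-z)^2 / (1 - q(1-z))^2 \leq 0$ shows $\log g$ is concave, so $\phi(\bm{p}) = \prod_{i=1}^{L} g(p_i)$ is Schur-concave on $[0,1]^L$. Since the constraint $\sum_{i=1}^L p_i = \lambda = Lp$ implies $(p,\ldots,p) \prec (p_1,\ldots,p_L)$, Schur-concavity yields $U_{\mathcal{W}}(z) = \phi(\bm{p}) \leq \phi(p,\ldots,p) = U_{\mathcal{Y}}(z)$. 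For the Binomial--Poisson inequality I would take the $L$-th root of $U_{\mathcal{Y}}(z)$ and apply the elementary bound $1 - x \leq e^{-x}$ with $x = p(1-z) \in [0,1]$, which after raising to the $L$-th power and using $Lp = \lambda$ gives $U_{\mathcal{Y}}(z) \leq e^{-\lambda(1-z)} = U_{\mathcal{X}}(z)$.

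The Poisson--NB step is the most delicate and I expect it to be the main obstacle, since the two PGFs have structurally different shapes (exponential versus rational power). Substituting $r = \lambda(1-p)/p$ and taking logarithms reduces $U_{\mathcal{Z}}(z) \geq U_{\mathcal{X}}(z)$ to the claim $\log\bigl((1-pz)/(1-p)\bigr) \leq p(1-z)/(1-p)$. The key observation is to set $y := (1-pz)/(1-p)$; a short computation gives $y - 1 = p(1-z)/(1-p)$, so the inequality collapses to the universal $\log y \leq y - 1$ for $y > 0$. Once this substitution is recognized the step is immediate, but without it the estimate is awkward to see directly.

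Chaining the three inequalities yields $U_{\mathcal{Z}}(z) \geq U_{\mathcal{X}}(z) \geq U_{\mathcal{Y}}(z) \geq U_{\mathcal{W}}(z)$ on $[0,1]$, which by the PGF characterization of LT ordering is precisely \eqref{eqn: LTorderbino}. As a sanity check, note that the whole chain is consistent with the variance ordering $\sigma_{\mathcal{Z}}^2 = \lambda/(1-p) \geq \sigma_{\mathcal{X}}^2 = \lambda \geq \sigma_{\mathcal{Y}}^2 = \lambda(1-p) \geq \sigma_{\mathcal{W}}^2 = \lambda - \sum_i p_i^2$ (the last inequality again by Cauchy--Schwarz against $\sum p_i = \lambda$), reflecting the intuition that LT order places the more dispersed distribution on the left.
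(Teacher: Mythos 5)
Your proposal is correct and follows essentially the same route as the paper's Appendix C: the LT chain is reduced to the three pairwise PGF inequalities, with the PB--binomial step handled by Schur-concavity of $\prod_i(1-p_i(1-z))$ plus the majorization $(p,\ldots,p)\prec(p_1,\ldots,p_L)$, and the binomial--Poisson and Poisson--NB steps handled by elementary log/exponential bounds. Your substitutions ($1-x\leq e^{-x}$ and $\log y \leq y-1$ with $y=(1-pz)/(1-p)$) are just a slightly more streamlined packaging of the paper's derivative arguments on the log-differences, so no substantive difference.
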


\begin{proof}
See Appendix B.
\end{proof}

It can be observed that for the extreme case that when parameter $p=1$, binomial user distribution converges to the deterministic number of users, which dominates any kind of random distributions with the same mean value under LT ordering sense. PB user distribution also subsumes deterministic case when $p_i$ are either $1$ or $0$. Moreover, due to Theorem \ref{thm: one}, if the SU distributions are ordered in LT sence, any c.m. (c.m.d.) performance metric of $N$ will also be ordered. Hence, without calculating or deriving the closed form expression, system performance can be compared after knowing the corresponding user distributions.

\section{Simulations}\label{sec: simulations}
An uplink cognitive radio system with multiple SUs where both SUs and BS having a single antenna is considered. In this section, using Monte-Carlo simulations, ergodic capacity and averaged BER are simulated to corroborate our analytical results. For all simulations, Rayleigh fading channels are assumed.

\begin{center}
\includegraphics[height=9.5cm,width=11.5cm]{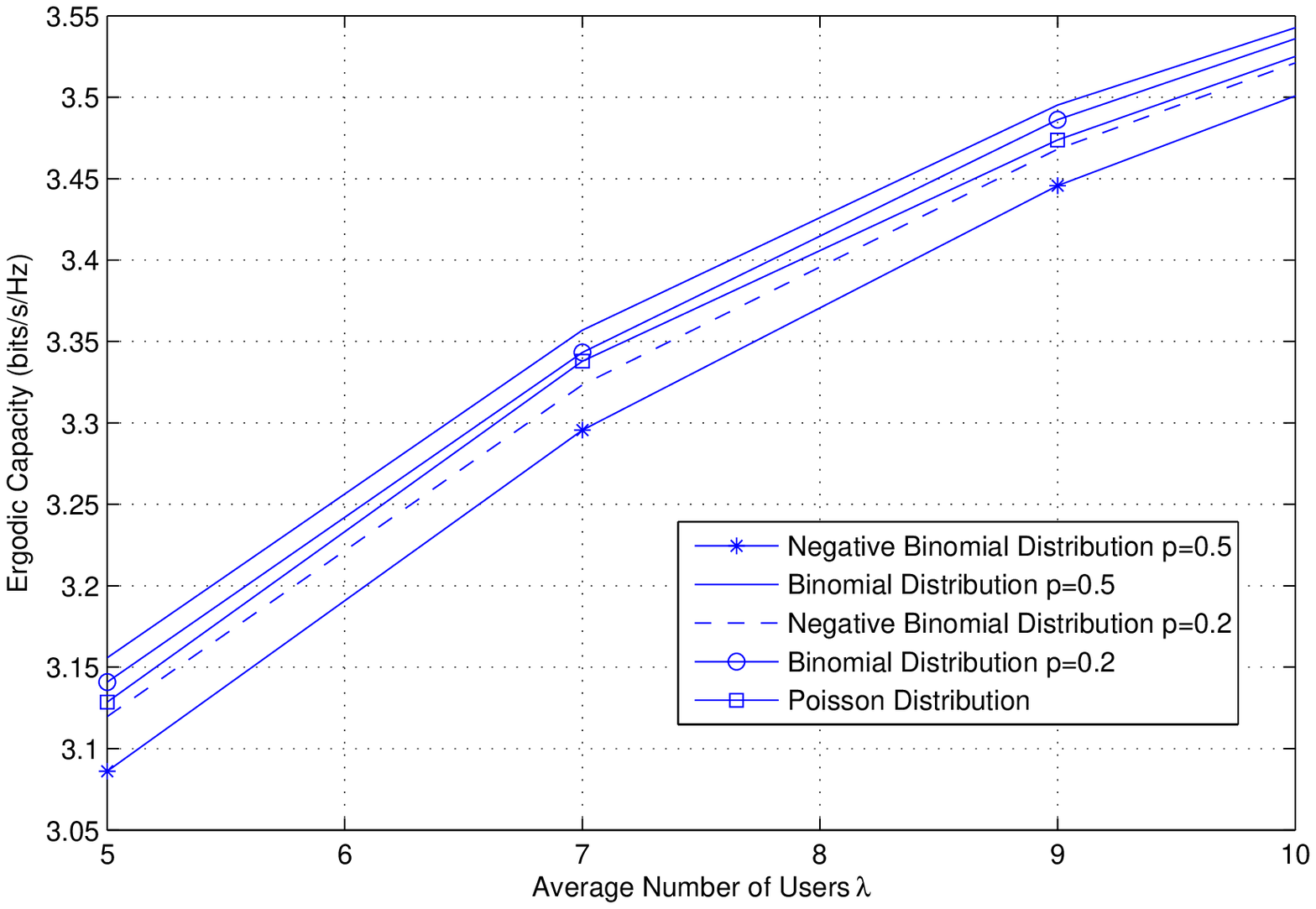}
\end{center}
\vspace{-0.3in}
\figcaption{Ergodic Capacity Under Different User Distributions.}
\label{fig: cogcomp}
In Section \ref{sec: bino_cap} and \ref{sec: neg_bino_cap}, ergodic capacity performances under binomial and NB user distributions are established. In Figure \ref{fig: cogcomp}, ergodic capacity is plotted versus $\ld=\text{E}[\mathcal{N}]$ for different user distributions. It can be seen that for a given user distribution, the ergodic capacity improves with average number of users. Also, in Section \ref{sec: user_order}, these two distributions are compared with the Poisson distribution in LT ordering sense. In Figure \ref{fig: cogcomp}, for a given $\ld$, binomial user distribution yields better ergodic capacity performance than Poisson, followed by NB user distribution. Furthermore, NB distribution converges to the Poisson distribution as the trial probability $p \rightarrow 0$ and stopping parameter $r \rightarrow \infty$, and the binomial distribution also converges towards the Poisson distribution as the number of trials goes to infinity and the product $Lp$ remains fixed. It can be seen from Figure \ref{fig: cogcomp} that for a fixed $\ld$, when the trial probability $p$ varies from 0.5 to 0.2, ergodic capacity of binomial and NB cases converge to the Poisson case.

\begin{center}
\includegraphics[height=9.5cm,width=11.5cm]{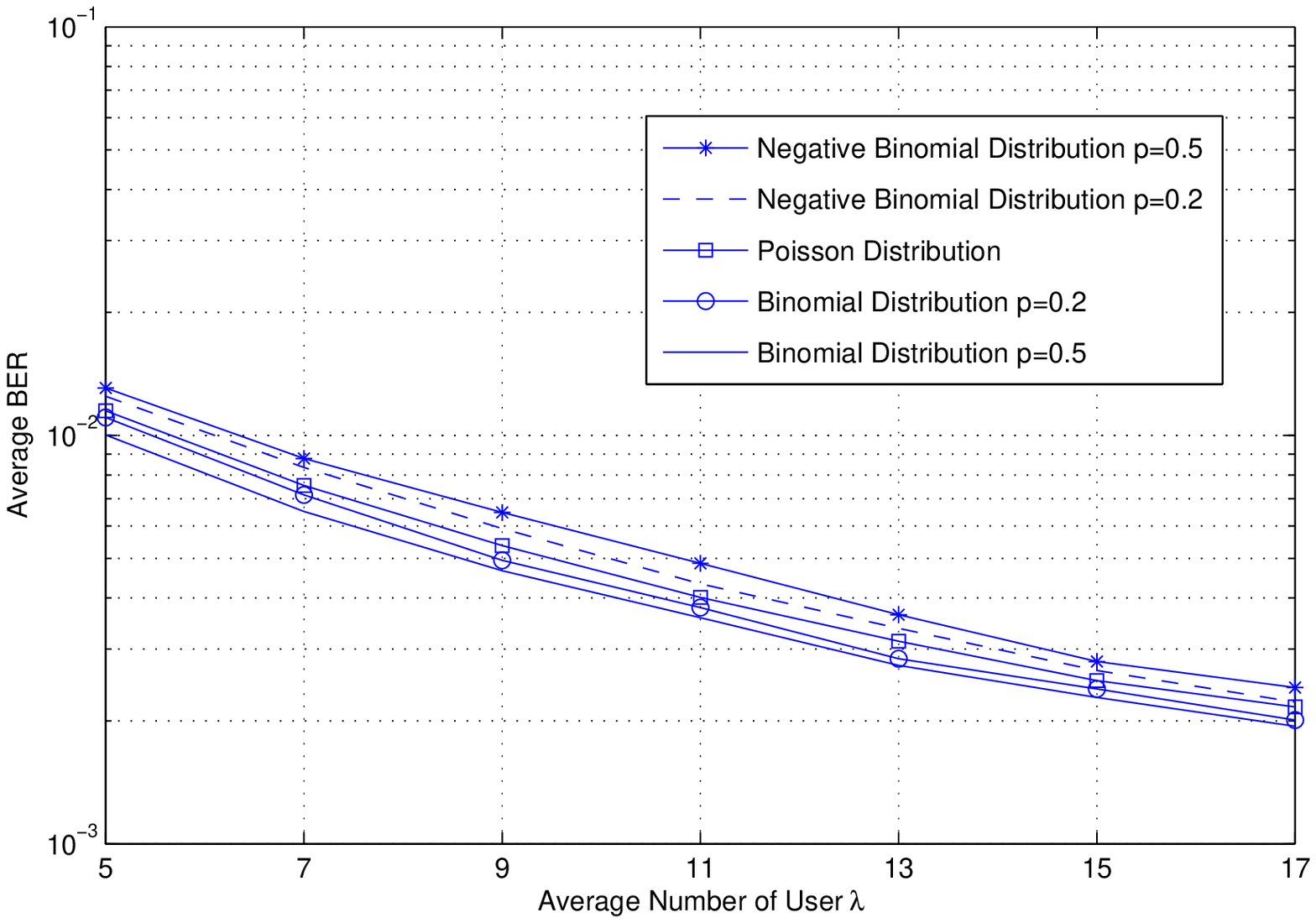}
\end{center}
\vspace{-0.3in}
\figcaption{Average BER Under Different User Distributions.}
\label{fig: cogbercomp}
In Section \ref{sec: bino_ber} and \ref{sec: neg_bino_ber}, we derived closed form expressions for averaged BER under binomial and NB user distributions. As we introduced in Section \ref{sec: LT_ordering}, $\avber$ is c.m. in $N$ and if $\mathcal{Z} \leq_{\rm Lt} \mathcal{X} \leq_{\rm Lt} \mathcal{Y}$, we have $\text{E}_{\mathcal{Z}} \left[\overline{\text{P}}_{\rm e}(\rho, \mathcal{Z}) \right] \geq \text{E}_{\mathcal{X}} \left[\overline{\text{P}}_{\rm e}(\rho, \mathcal{X}) \right] \geq \text{E}_{\mathcal{Y}} \left[\overline{\text{P}}_{\rm e}(\rho, \mathcal{Y}) \right]$, $\forall \rho$. As shown in Figure \ref{fig: cogbercomp}, average BER performances under are plotted against $\ld=\text{E}[\mathcal{Z}] = \text{E}[\mathcal{X}] = \text{E}[\mathcal{Y}]$. Here, $\mathcal{Z}$ is NB, $\mathcal{X}$ is Poisson, and $\mathcal{Y}$ is binomial distributed. With same mean number of users, average BER under binomial $\mathcal{Y}$ always outperforms Poisson $\mathcal{X}$ and NB $\mathcal{Z}$. Additionally, when $\ld$ is fixed, BER performance under binomial and NB cases converge to Poisson case as $p$ is decreased from 0.5 to 0.2.

\begin{center}
\includegraphics[height=9.5cm,width=11.5cm]{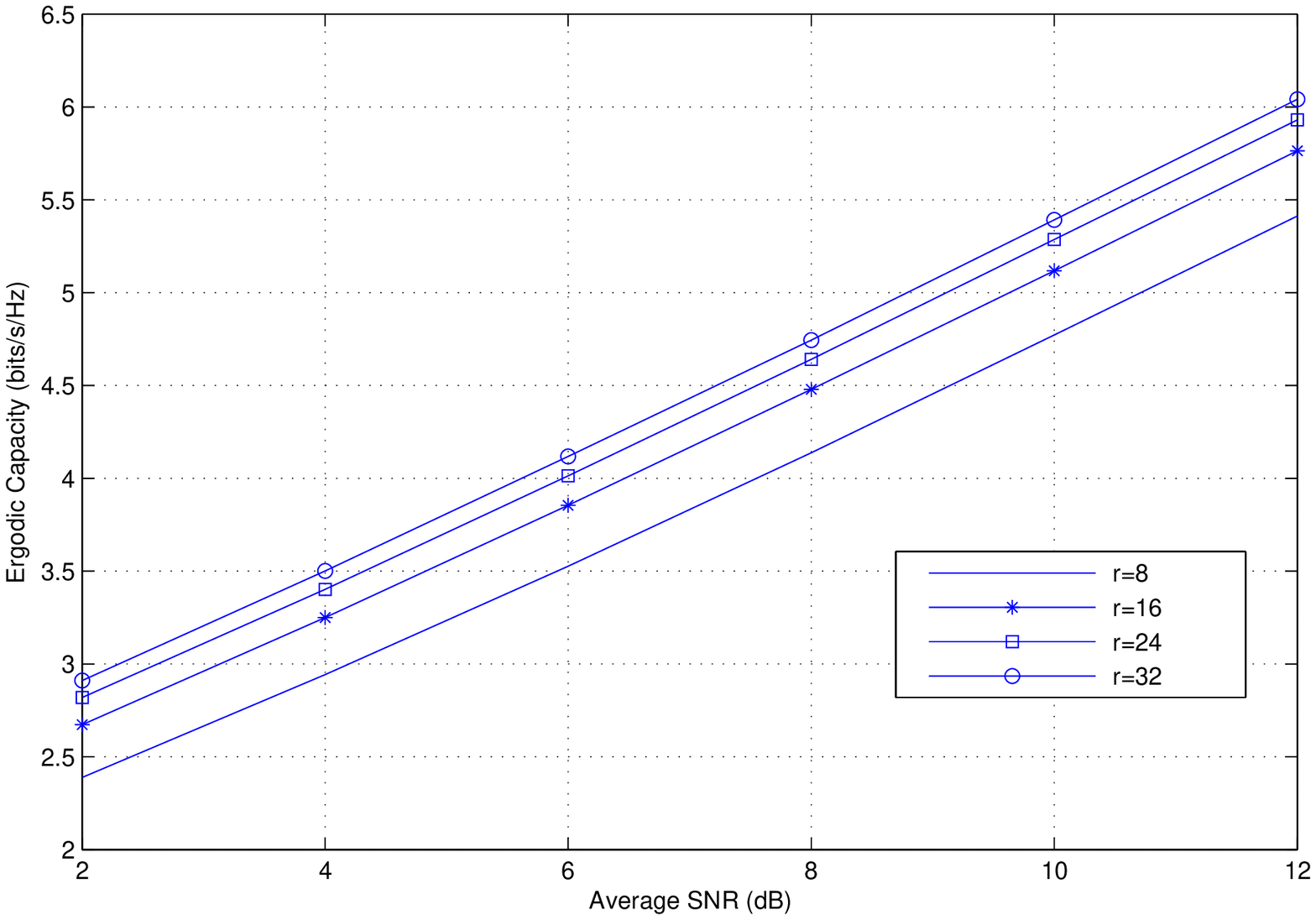}
\end{center}
\vspace{-0.3in}
\figcaption{Ergodic Capacity of NB $\mathcal{N}$ with Different $r$.}
\label{fig: delayperformance}
In section \ref{sec: neg_bino_cap}, we mentioned that in NB case, the trade-off between performance and system delay can be balanced by a choice of the parameter $r$. As shown in Figure \ref{fig: delayperformance}, for a given average SNR, ergodic capacity under different values of parameter $r$ are simulated. It can be observed that as $r$ increases from 8 to 32, the ergodic capacity performance is increased only approximately 20$\%$. However, the average system delay when $r=32$ is four times as much as $r=8$. Hence there are diminishing returns in capacity as the delay parameter is increased.

\section{Conclusions}\label{sec: conclusions}
An underlay cognitive radio system with multiple SUs is analyzed when the number of SUs is random and MUD is used. Outage probability is related directly to the probability generating function of the number of active users. The scaling laws of the ergodic capacity for large mean number of users are studied. Closed form non-asymptotic expressions for the averaged bit error rate under binomial, and NB active users are also derived. A non-homogeneous interference scenario is also considered where the number of active SUs follows PB distribution. In this case, Poisson approximation is applied to study the ergodic capacity performance. Furthermore, outage probability, ergodic capacity, averaged BER performances for two different user distributions are shown to be ordered if the user distributions are LT ordered.

\appendices
\section{Proof of Theorem \ref{thm: cogcap1}}
Defining $y := e^{-x}$ and integrating by substitution,
\begin{align}
\ecap &= \rho \int_0^\infty \frac{1-U_{\mathcal{N}}(1-e^{-x})}{1+\rho x} \dr x \nonumber \\
&= \int_0^1 \frac{1-U_{\mathcal{N}}(1-y)}{1-\rho \log y} \left(\frac{\rho}{y} \right) \dr y \nonumber \\
&= \int_0^{\frac{\sqrt{\log \ld}}{\ld}}  \frac{1-U_{\mathcal{N}}(1-y)}{1-\rho \log y}
\left(\frac{\rho}{y} \right) \dr y + \int_{\frac{\sqrt{\log \ld}}{\ld}}^1
\frac{1-U_{\mathcal{N}}(1-y)}{1-\rho \log y} \left(\frac{\rho}{y} \right) \dr y 
\label{eqn: cog_pexp1}
\end{align}
For the first term after the third equality in \eqref{eqn: cog_pexp1}, we have the following inequalities according to the lower bound in \eqref{eqn: PGFbound}:
\begin{align}
0 &< \int_0^{\frac{\sqrt{\log \ld}}{\ld}} \frac{U_{\mathcal{N}}(1-y)}{1-\rho \log y} \left(\frac{\rho}{y} \right) \dr y <
\int_0^{\frac{\sqrt{\log \ld}}{\ld}} \frac{\ld y}{1+\rho \log(\lambda)} \left(\frac{\rho}{y} \right)  \dr y \label{numerator} \\
&= \frac{\rho \sqrt{\log(\ld)}}{1+\rho \log(\lambda)-(\rho/2)\log(\log(\ld))}   \label{cog_term1_bound}
\end{align}
The right hand side in \eqref{numerator} holds because the denominator of the integrand is replaced with its lower limit. It can be seen that the upper bound after the equality in
\eqref{cog_term1_bound} yields $O \left(1/\sqrt{\log(\ld)}\right)$ and has limit $0$ as $\ld \rightarrow \infty$. This implies that the first term in \eqref{eqn: cog_pexp1} should have limit $0$. The second term in \eqref{eqn: cog_pexp1} has the upper and lower bounds given by,
\begin{align}\label{eqn: dominantterm}
&\int_{\frac{\sqrt{\log \ld}}{\ld}}^1 \frac{1-
U_{\mathcal{N}}\left(1-\frac{\sqrt{\log(\ld)}}{\ld}\right)}{(1-\rho \log(y))} \left(\frac{\rho}{y} \right) \dr y \nonumber \\
&< \int_{\frac{\sqrt{\log \ld}}{\ld}}^1 \frac{1-U_{\mathcal{N}}(1-y)}{1-\rho \log (y)} \left(\frac{\rho}{y} \right) \dr y \nonumber \\ 
&< \int_{\frac{\sqrt{\log \ld}}{\ld}}^1 \frac{1-U_{\mathcal{N}}(0)}{1-\rho \log (y)} \left(\frac{\rho}{y} \right) \dr y 
\end{align}
in which the lower and upper bounds are obtained by bounding the numerator since $U_{\mathcal{N}}(1-y)$ is a monotonically decreasing function of $y$. Defining a normalized random variable $\mathcal{N}^{\prime} = \mathcal{N}/\ld$ which has mean value 1 and variance ${\sigma_{\mathcal{N}^{\prime}}^2} = o(1)$ as $\ld \rightarrow \infty$, using the upper bound in \eqref{eqn: PGFbound} we have:
\begin{align}
U_{\mathcal{N}^{\prime}}(s) &\leqslant 1 - (1-s) + \frac{\sigma_{\mathcal{N}^{\prime}}^2}{2}(1-s)^2 \nonumber \\
&=s + \frac{\sigma_{\mathcal{N}^{\prime}}^2}{2}(1-s)^2
\end{align}
and
\begin{align}
U_{\mathcal{N}}(s)=U_{\mathcal{N}^{\prime}}(s^{\ld}) \leqslant s^{\ld} + \frac{\sigma_{\mathcal{N}^{\prime}}^2}{2}(1-s^{\ld})^2.
\end{align}
Moreover, the numerator of the lower bound in \eqref{eqn: dominantterm} can be further lower bounded as following:
\begin{align}
1-U_{\mathcal{N}}\left(1-\frac{\sqrt{\log(\ld)}}{\ld}\right) \geqslant 1- \left(1-\frac{\sqrt{\log(\ld)}}{\ld}\right)^{\ld} - \frac{\sigma_{\mathcal{N}^{\prime}}^2}{2}\left(1-\left(1-\frac{\sqrt{\log(\ld)}}{\ld}\right)^{\ld}\right)^2
\end{align}
Therefore, the upper and lower bounds in \eqref{eqn: dominantterm} turn out to be 
\begin{align}\label{eqn: 11}
\left(1 - g(\ld) - \frac{\sigma_{\mathcal{N}^{\prime}}^2}{2}\left(1-g(\ld)\right)^2 \right) \log\left(1 + \rho\log(\ld) - \frac{\rho}{2}\log(\log(\ld))\right)
\end{align}
and 
\begin{align}\label{eqn: 22}
\left(1-U_{\mathcal{N}}(0) \right)\log\left(1 + \rho\log(\ld) - \frac{\rho}{2}\log(\log(\ld))\right)
\end{align}
respectively, where $g(\ld)=\left(1-\frac{\sqrt{\log(\ld)}}{\ld}\right)^{\ld}$. In \eqref{eqn: 11}, when condition $(b)$ holds,
\begin{align}\label{eqn: limit}
\lim_{\ld \rightarrow \infty}\left(g(\ld) + \frac{\sigma_{\mathcal{N}^{\prime}}^2}{2}\left(1-g(\ld)\right)^2\right) \log(1+\log(\ld))=0
\end{align}
and in \eqref{eqn: 22}, when condition $(a)$ holds 
\begin{align}
\lim_{\ld \rightarrow \infty}U_{\mathcal{N}}(0)\log(1 + \rho\log(\ld)) = 0.
\end{align}
Hence, both \eqref{eqn: 11} and \eqref{eqn: 22} converge to $\log\left(1 + \rho\log(\ld) - (\rho/2)\log(\log(\ld))\right)$ as $\ld \rightarrow \infty$. Moreover,
\begin{align}\label{eqn: complete}
\log\left(1 + \rho\log(\ld) - \frac{\rho}{2}\log(\log(\ld))\right) &= \log \left( \left( 1+\rho \log (\ld) \right) \left( 1- \frac{\frac{\rho \log(\log(\ld))}{2}}{\log(1+\rho \log(\ld))} \right) \right) \nonumber \\
&= \log(1+\rho \log(\ld)) + \log \left( 1- \frac{\frac{\rho \log(\log(\ld))}{2}}{\log(1+\rho \log(\ld))} \right) \nonumber \\
&=\log(1+\rho \log(\ld)) + O\left(\log(\log(\ld))/\log(\ld)\right)
\end{align} as $\ld \rightarrow \infty$. 
Therefore, considering the fact that $\log(\log(\ld))/\log(\ld)$ decays faster that $1/\sqrt{\log(\ld)}$, \eqref{cog_term1_bound} and \eqref{eqn: complete} complete the proof.

\section{Theorem \ref{thm: cogcap1} Holds For PB SU Distribution}
Following the definition in Section \ref{sec: Poisbino}, we have $\sum_{i=1}^L p_i = \ld$ and $\sigma_{\mathcal{W}}^2 = \sum_{i=1}^L p_i(1-p_i)$. Since $\pr\left[\mathcal{W}= 0\right] = \prod_{i=1}^L(1-p_i) = U_{\mathcal{W}}(0)$ and $\log(1-p_i) < -p_i$, we have
\begin{align} 
\log \left( \prod_{i=1}^L (1-p_i) \right) = \sum_{i=1}^L \log(1-p_i) < \sum_{i=1}^L(-p_i) = -\ld,
\end{align}
equivalently, $0< \pr\left[ \mathcal{W}= 0 \right] < e^{-\ld}$, which implies that condition $a$ is satisfied. Furthermore, it is obvious that $\sigma_{\mathcal{W}}^2 < \sum_{i=1}^L p_i =\ld$, hence condition $(b)$ holds.

\section{Proof of Theorem \ref{thm: binopois}}
$\mathcal{Z} \leq_{\rm{Lt}} \mathcal{X}$: to show $U_{\mathcal{Z}}(z)\leq U_{\mathcal{Y}}(z)$ first we take logarithm to $U_{\mathcal{Z}}(z)$ and $U_{\mathcal{Y}}(z)$ and we get
\begin{align}\label{eqn: PGFdif}
\log(U_{\mathcal{Z}}(z))-\log(U_{\mathcal{Y}}(z))=\frac{pr}{1-p}(z-1)-r\log\left(\frac{1-p}{1-pz}\right). 
\end{align}
By shuffling the terms we rewrite the problem as comparing $\frac{p}{1-p}-\log(1-p)+\log(1-pz)$ with $0$. Taking the first derivative with respect to $z$ we get
\begin{equation}\label{eqn: PGFder}
\frac{\du\left(\frac{p}{1-p}-\log(1-p)+\log(1-pz)\right)}{\du z} =\frac{p}{1-p}-\frac{p}{1-pz} \geq 0
\end{equation}
for all $0 \leq z \leq 1$. This implies that \eqref{eqn: PGFdif} is an monotonically increasing function of $z$ with the maximum value $0$ at $z=1$. 

$\mathcal{X} \leq_{\rm{Lt}} \mathcal{Y}$: to show $U_{\mathcal{X}}(z)\geq U_{\mathcal{Y}}(z)$ first we take logarithm to $U_{\mathcal{X}}(z)$ and $U_{\mathcal{Y}}(z)$ and we get
\begin{align}\label{eqn: PGFdifbino}
\log(U_{\mathcal{X}}(z))-\log(U_{\mathcal{Y}}(z))=Lp(z-1)-L\log\left(1-p+pz\right). 
\end{align}
By rearranging the terms we rewrite the problem as comparing $s-\log(1+s)$ with $0$, where $s=p(z-1)$. Taking the first derivative with respect to $s$ we get
\begin{align}\label{eqn: PGFderbino}
\frac{\du\left(s-\log(1+s)\right)}{\du s} &=1-\frac{1}{s+1} \nonumber \\
& = \frac{s}{1+s} \leq 0
\end{align}
for all $0 \leq z \leq 1$. This implies that \eqref{eqn: PGFdifbino} is an monotonically decreasing function of $z$ with the minimum value $0$ at $z=1$. 

$\mathcal{Y} \leq_{\rm{Lt}} \mathcal{W}$: First we express all success probabilities of $\mathcal{W}$, denoted as $p_i$, in vector form so that $\bm{p}=[p_1 \quad p_2 \quad \ldots \quad p_i]$. To show that $U_{\mathcal{Y}}(z) \geq U_{\mathcal{W}}(z)$ we notice that the equality is achieved when $\bm{p}=[\ld/L \quad \ld/L \quad \ldots \quad \ld/L]$ which is denoted as $\bm{p}_{\rm bin}$. By applying Theorem \ref{thm: schur} it can be seen that $U_{\mathcal{W}}(z)=\prod_{i=1}^L(1-p_i+p_{i}z)$ is a Shur-concave function of $\bm{p}$ since $\log(1-p_i+p_{i}z)$ is concave of $p_i$. Second, we assume there exists at least one probability vector of $\mathcal{W}$, denoted as $\bm{w}=[w_1 \quad w_2 \quad \ldots \quad w_i]$, and $\bm{w} \prec \bm{p}_{\rm bin}$. Then we have
\begin{align}
w_1 = \max_{i}w_i \leq \ld/L,
\end{align}
so that 
\begin{align}
\sum_{i=1}^L w_i \leq \sum_{i=1}^L w_1 \leq \ld
\end{align}
which will violate the condition $\sum_{i=1}^L w_i=\ld$ unless $\bm{w}=\bm{p}$. This indicates that $\bm{p}_{\rm bin}$ is majorized by any other arbitrary probability vector $\bm{w}$ of $\mathcal{W}$, and $U_{\mathcal{Y}}(z,\bm{p})= U_{\mathcal{W}}(\bm{p}_{z,\rm{bin}})\geq U_{\mathcal{W}}(z,\bm{w})$ at every value of $z$.

Hence, we have $U_{\mathcal{Z}}(z) \geq U_{\mathcal{X}}(z)\geq U_{\mathcal{Y}}(z) \geq U_{\mathcal{W}}(z)$, equivalently it can be concluded that $\mathcal{Z} \leq_{\rm Lt} \mathcal{X} \leq_{\rm Lt} \mathcal{Y} \leq_{\rm Lt} \mathcal{W}$, which completes the proof. 

\bibliographystyle{IEEEtran}

\bibliography{adarsh_ref}

\end{document}